\def\t{^{\mbox{\tiny T}}}
\def\s{{\cal S}}
\def\g{{\cal G}}
\def\z{{\bf 0}}
\def\id{\mathbb{I}}
\theoremstyle{plain} 
\newtheorem{theorem}{Theorem}
\newtheorem{lemma}{Lemma}
\newtheorem{requirement}{Requirement}
\definecolor{darkgreen}{rgb}{0,0.5,0}
\begin{document}

\title{A derivation of quantum theory from physical requirements}
\author{Llu\'\i s Masanes}
\affiliation{ICFO-Institut de Ciencies Fotoniques, Mediterranean Technology Park, 08860 Castelldefels (Barcelona), Spain}
\author{Markus P.\ M\"uller}
\affiliation{Institute of Mathematics, Technical University of Berlin, 10623 Berlin, Germany,}
\affiliation{Institute of Physics and Astronomy, University of Potsdam, 14476 Potsdam, Germany}
\affiliation{Perimeter Institute for Theoretical Physics, 31 Caroline Street North, Waterloo, ON N2L 2Y5, Canada.}

\begin{abstract}
Quantum theory is usually formulated in terms of abstract mathematical postulates, involving Hilbert spaces, state vectors, and unitary operators. In this work, we show that the full formalism of quantum theory can instead be derived from five simple physical requirements, based on elementary assumptions about preparation, transformations and measurements. 
This is more similar to the usual formulation of special relativity, where two simple physical requirements -- the principles of relativity and light speed invariance -- are used to derive the mathematical structure of Minkowski space-time. Our derivation provides insights into the physical origin of the structure of quantum state spaces (including a group-theoretic explanation of the Bloch ball and its three-dimensionality), and it suggests several natural possibilities to construct consistent modifications of quantum theory.
\end{abstract}

\date{October 5, 2010}

\maketitle

\section{Introduction}

Quantum theory is usually formulated by postulating the mathematical structure and representation of states, transformations, and measurements. The general physical consequences that follow (like violation of Bell-type inequalities~\cite{J.Bell}, the possibility of performing state tomography with local measurements, or factorization of integers in polynomial time~\cite{n-factoring}) come as theorems which use the postulates as premises. In this work, this procedure is reversed: we impose five simple physical requirements, and this suffices to single out quantum theory and derive its mathematical formalism uniquely. This is more similar to the usual formulation of special relativity, where two simple physical requirements ---the principles of relativity and light speed invariance--- are used to derive the mathematical structure of Minkowski space-time and its transformations.

The requirements can be schematically stated as:
\begin{enumerate}

\item In systems that carry one bit of information, each state is characterized by a finite set of outcome probabilities. 

\item The state of a composite system is characterized by the statistics of measurements on the individual components. 

\item All systems that effectively carry the same amount of information have equivalent state spaces.

\item Any pure state of a system can be reversibly transformed into any other. 

\item In systems that carry one bit of information, all mathematically well-defined measurements are allowed by the theory.

\end{enumerate}
These requirements are imposed on the framework of generalized probabilistic theories \cite{tomo, Hardy5, Barn, Barr, gwmackey, BvN, darian}, which already assumes that some operational notions (preparation, mixture, measurement, and counting relative frequencies of measurement outcomes) make sense. Due to its conceptual simplicity, this framework leaves room for an infinitude of possible theories, allowing for weaker- or stronger-than-quantum non-locality~\cite{Barr, vandam, ic, pr, nav, boxworld}. In this work, we show that quantum theory (QT) and classical probability theory (CPT) are very special among those theories: they are the only general probabilistic theories that satisfy the five requirements stated above. 

The non-uniqueness of the solution is not a problem, since CPT is embedded in QT, thus QT is the most general theory satisfying the requirements. One can also proceed as Hardy in \cite{Hardy5}: if Requirement \ref{a.symmetry} is strengthened by imposing continuity of the reversible transformations, then CPT is ruled out and QT is the only theory satisfying the requirements. This strengthening can be justified by the continuity of time evolution of physical systems. 

It is conceivable that in the future, another theory may replace or generalize QT. Such a theory must violate at least one of our assumptions. The clear meaning of our requirements allows to straightforwardly explore potential features of such a theory. The relaxation of each of our requirements constitutes a different way to go beyond QT. 

The search for alternative axiomatizations of quantum theory (QT) is an old topic that goes back to Birkhoff and von Neumann~\cite{BvN}, and has been approached in many different ways: extending propositional logic \cite{BvN, gwmackey}, using operational primitives \cite{Hardy5, tomo, Barr, Barn, darian}, searching for information-theoretic principles \cite{Barr, Barn, Daki, bFuchs, br, vandam, ic}, building upon the phenomenon of quantum nonlocality \cite{pr, Barr, vandam, ic, nav}. Alfsen and Shultz~\cite{AlfsenShultzBook} have accomplished a complete characterization of the state spaces of QT from a geometric point of view, but the result does not seem to have an immediate physical meaning. In particular, the fact that the state space of a generalized bit is a three-dimensional ball is an assumption there, while here it is derived from physical requirements.

This work is particularly close to \cite{Hardy5, Daki}, from where it takes some material. More concretely, the multiplicativity of capacities and the Simplicity Axiom from \cite{Hardy5} are replaced by Requirement \ref{a.effects}. In comparison with~\cite{Daki}, the fact that each state of a generalized bit is the mixture of two distinguishable ones, the maximality of the group of reversible transformations and its orthogonality, and the multiplicativity of capacities, are also replaced by Requirement \ref{a.effects}.

{\em Summary of the paper.} Section II contains an introduction to the framework of generalized probabilistic theories, where some elementary results are stated without proof. In Section III the five requirements and their significance are explained in full detail. Section~\ref{sqt} is the core of this work. It contains the characterization of all theories compatible with the requirements, concluding that the only possibilities are CPT and QT. The Conclusion (Section \ref{s.conclusions}) recapitulates the results and adds some remarks. The Appendix contains all lemmas and their proofs.

\section{Generalized probabilistic theories}\label{gpt}

In CPT there can always be a joint probability distribution for all random variables under consideration. The framework of generalized probabilistic theories (GPTs), also called convex operational framework, generalizes this by allowing the possibility of random variables that cannot have a joint probability distribution, or cannot be simultaneously measured (like noncommuting observables in QT). 

This framework assumes that at some level there is a classical reality, where it makes sense to talk about experimentalists performing basic operations such as: preparations, mixtures, measurements, and counting relative frequencies of outcomes. These are the primary concepts of this framework. It also  provides a unified way for all GPTs to represent states, transformations and measurements. A particular GPT specifies which of these are allowed, but it does not tell their correspondence with actual experimental setups. On its own, a GPT can still make nontrivial predictions like: the maximal violation of a Bell inequality \cite{J.Bell}, the complexity-theoretic computational power \cite{n-factoring,theoryspace}, and in general, all information-theoretic properties of the theory \cite{Barr}.

The framework of GPTs can be stated in different ways, but all lead to the same formalism \cite{tomo,Hardy5,Barn,Barr,gwmackey,BvN,darian}. This formalism is presented in this section at a very basic level, providing some elementary results without proofs.

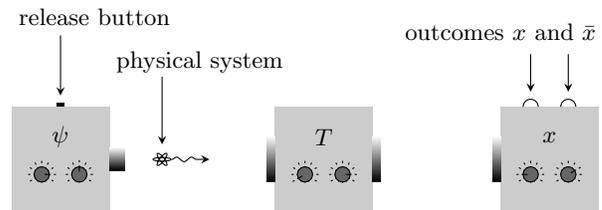
\begin{figure}
\begin{tikzpicture}[>=stealth]


\fill[fill=white!80!black] (.5,0) rectangle +(1.3,1.4); \shade[top
color=white, bottom color=black] (1.808,.55) rectangle +(.2,.3);
\draw (1.15,1) node {$\psi$};

\fill[fill=black] (1.1,1.4) rectangle +(.1,.05); \draw[<-]
(1.15,1.55) --+(0,.8); \draw (1.6,2.6) node {release button};

\draw [xshift=2.5cm, yshift=.7cm] (0,0) ellipse (.12cm and .025cm)
(0,0) [rotate=52] ellipse (.10cm and .03cm); \draw [xshift=2.5cm,
yshift=.7cm] (0,0) [rotate=128] ellipse (.10cm and .03cm); \draw
[->,decorate,
decoration={snake,amplitude=.3mm,segment length=2mm,post length=.8mm}] (2.63,.7) --+(.5,0);

\draw[<-] (2.5,.9) --+(0,.9); \draw (3,2) node {physical system};

\filldraw[fill=white!40!black, draw=black]  (.9,.5) circle (.1cm);
\draw (.9,.5)++(0:.04cm) --++(0:.08cm) ++(0:.04cm) --++(0:.03cm);
\draw (.9,.5)++(30:.1cm) --++(30:.01cm) ++(30:.04cm) --++(30:.03cm);
\draw (.9,.5)++(60:.1cm) --++(60:.01cm) ++(60:.04cm) --++(60:.03cm);
\draw (.9,.5)++(90:.1cm) --++(90:.01cm) ++(90:.04cm) --++(90:.03cm);
\draw (.9,.5)++(120:.1cm) --++(120:.01cm) ++(120:.04cm) --++(120:.03cm);
\draw (.9,.5)++(150:.1cm) --++(150:.01cm) ++(150:.04cm) --++(150:.03cm);
\draw (.9,.5)++(180:.1cm) --++(180:.01cm) ++(180:.04cm) --++(180:.03cm);
\draw (.9,.5)++(210:.1cm) --++(210:.01cm) ++(210:.04cm) --++(210:.03cm);
\draw (.9,.5)++(240:.1cm) --++(240:.01cm) ;
\draw (.9,.5)++(270:.1cm) --++(270:.01cm) ;
\draw (.9,.5)++(300:.1cm) --++(300:.01cm) ;
\draw (.9,.5)++(330:.1cm) --++(330:.01cm) ++(330:.04cm) --++(330:.03cm);

\filldraw[fill=white!40!black, draw=black]  (1.4,.5) circle (.1cm);
\draw (1.4,.5)++(0:.1cm) --++(0:.01cm) ++(0:.04cm) --++(0:.03cm);
\draw (1.4,.5)++(30:.1cm) --++(30:.01cm) ++(30:.04cm) --++(30:.03cm);
\draw (1.4,.5)++(60:.1cm) --++(60:.01cm) ++(60:.04cm) --++(60:.03cm);
\draw (1.4,.5)++(90:.04cm) --++(90:.08cm) ++(90:.04cm) --++(90:.03cm);
\draw (1.4,.5)++(120:.1cm) --++(120:.01cm) ++(120:.04cm) --++(120:.03cm);
\draw (1.4,.5)++(150:.1cm) --++(150:.01cm) ++(150:.04cm) --++(150:.03cm);
\draw (1.4,.5)++(180:.1cm) --++(180:.01cm) ++(180:.04cm) --++(180:.03cm);
\draw (1.4,.5)++(210:.1cm) --++(210:.01cm) ++(210:.04cm) --++(210:.03cm);
\draw (1.4,.5)++(240:.1cm) --++(240:.01cm) ;
\draw (1.4,.5)++(270:.1cm) --++(270:.01cm) ;
\draw (1.4,.5)++(300:.1cm) --++(300:.01cm) ;
\draw (1.4,.5)++(330:.1cm) --++(330:.01cm) ++(330:.04cm) --++(330:.03cm);


\fill[fill=white!80!black] (4,0) rectangle +(1.3,1.4);

\shade[top color=white, bottom color=black] (5.3,.4) rectangle
+(.1,.6); \shade[top color=white, bottom color=black] (4,.4)
rectangle +(-.1,.6);

\draw (4.65,1) node {$T$};
\filldraw[fill=white!40!black,draw=black]  (4.4,.5) circle (.1cm);
\draw (4.4,.5)++(0:.1cm) --++(0:.01cm) ++(0:.04cm) --++(0:.03cm);
\draw (4.4,.5)++(30:.1cm) --++(30:.01cm) ++(30:.04cm)
--++(30:.03cm); \draw (4.4,.5)++(60:.1cm) --++(60:.01cm)
++(60:.04cm) --++(60:.03cm); \draw (4.4,.5)++(90:.1cm)
--++(90:.01cm) ++(90:.04cm) --++(90:.03cm); \draw
(4.4,.5)++(120:.1cm) --++(120:.01cm) ++(120:.04cm)
--++(120:.03cm); \draw (4.4,.5)++(150:.1cm) --++(150:.01cm)
++(150:.04cm) --++(150:.03cm); \draw (4.4,.5)++(180:.1cm)
--++(180:.01cm) ++(180:.04cm) --++(180:.03cm); \draw
(4.4,.5)++(210:.04cm) --++(210:.08cm) ++(210:.04cm)
--++(210:.03cm); \draw (4.4,.5)++(240:.1cm) --++(240:.01cm) ;
\draw (4.4,.5)++(270:.1cm) --++(270:.01cm) ; \draw
(4.4,.5)++(300:.1cm) --++(300:.01cm) ; \draw (4.4,.5)++(330:.1cm)
--++(330:.01cm) ++(330:.04cm) --++(330:.03cm);

\filldraw[fill=white!40!black, draw=black]  (4.9,.5) circle
(.1cm); \draw (4.9,.5)++(0:.04cm) --++(0:.08cm) ++(0:.04cm)
--++(0:.03cm); \draw (4.9,.5)++(30:.1cm) --++(30:.01cm)
++(30:.04cm) --++(30:.03cm); \draw (4.9,.5)++(60:.1cm)
--++(60:.01cm) ++(60:.04cm) --++(60:.03cm); \draw
(4.9,.5)++(90:.1cm) --++(90:.01cm) ++(90:.04cm) --++(90:.03cm);
\draw (4.9,.5)++(120:.1cm) --++(120:.01cm) ++(120:.04cm)
--++(120:.03cm); \draw (4.9,.5)++(150:.1cm) --++(150:.01cm)
++(150:.04cm) --++(150:.03cm); \draw (4.9,.5)++(180:.1cm)
--++(180:.01cm) ++(180:.04cm) --++(180:.03cm); \draw
(4.9,.5)++(210:.1cm) --++(210:.01cm) ++(210:.04cm)
--++(210:.03cm); \draw (4.9,.5)++(240:.1cm) --++(240:.01cm) ;
\draw (4.9,.5)++(270:.1cm) --++(270:.01cm) ; \draw
(4.9,.5)++(300:.1cm) --++(300:.01cm) ; \draw (4.9,.5)++(330:.1cm)
--++(330:.01cm) ++(330:.04cm) --++(330:.03cm);


\fill[fill=white!80!black] (7,0) rectangle +(1.3,1.4); \shade[top
color=white, bottom color=black] (7,.4) rectangle +(-.1,.6);

\draw[<-] (7.4,1.6) --+(0,.5); \draw[<-] (7.9,1.6) --+(0,.5);
\draw (7,2.4) node {outcomes $x$ and $\bar{x}$};

\draw (7.5,1.4) arc (0:180:.1cm); \draw (8,1.4) arc (0:180:.1cm);

\draw (7.65,1) node {$x$};
\filldraw[fill=white!40!black,draw=black]  (7.4,.5) circle (.1cm);

\draw (7.4,.5)++(0:.1cm) --++(0:.01cm) ++(0:.04cm) --++(0:.03cm);

\draw (7.4,.5)++(30:.1cm) --++(30:.01cm) ++(30:.04cm)
--++(30:.03cm);

\draw (7.4,.5)++(60:.1cm) --++(60:.01cm) ++(60:.04cm)
--++(60:.03cm);

\draw (7.4,.5)++(90:.1cm) --++(90:.01cm) ++(90:.04cm)
--++(90:.03cm);

\draw (7.4,.5)++(120:.1cm) --++(120:.01cm) ++(120:.04cm)
--++(120:.03cm);

\draw (7.4,.5)++(150:.1cm) --++(150:.01cm) ++(150:.04cm)
--++(150:.03cm);

\draw (7.4,.5)++(180:.04cm) --++(180:.08cm) ++(180:.04cm)
--++(180:.03cm);

\draw (7.4,.5)++(210:.1cm) --++(210:.01cm) ++(210:.04cm)
--++(210:.03cm);

\draw (7.4,.5)++(240:.1cm) --++(240:.01cm);

\draw (7.4,.5)++(270:.1cm) --++(270:.01cm);

\draw (7.4,.5)++(300:.1cm) --++(300:.01cm);

\draw (7.4,.5)++(330:.1cm) --++(330:.01cm) ++(330:.04cm)
--++(330:.03cm);

\filldraw[fill=white!40!black, draw=black]  (7.9,.5) circle
(.1cm);

\draw (7.9,.5)++(0:.1cm) --++(0:.01cm) ++(0:.04cm) --++(0:.03cm);

\draw (7.9,.5)++(30:.04cm) --++(30:.08cm) ++(30:.04cm)
--++(30:.03cm);

\draw (7.9,.5)++(60:.1cm) --++(60:.01cm) ++(60:.04cm)
--++(60:.03cm);

\draw (7.9,.5)++(90:.1cm) --++(90:.01cm) ++(90:.04cm)
--++(90:.03cm);

\draw (7.9,.5)++(120:.1cm) --++(120:.01cm) ++(120:.04cm)
--++(120:.03cm); \draw (7.9,.5)++(150:.1cm) --++(150:.01cm)
++(150:.04cm) --++(150:.03cm); \draw (7.9,.5)++(180:.1cm)
--++(180:.01cm) ++(180:.04cm) --++(180:.03cm); \draw
(7.9,.5)++(210:.1cm) --++(210:.01cm) ++(210:.04cm)
--++(210:.03cm); \draw (7.9,.5)++(240:.1cm) --++(240:.01cm) ;
\draw (7.9,.5)++(270:.1cm) --++(270:.01cm) ; \draw
(7.9,.5)++(300:.1cm) --++(300:.01cm) ; \draw (7.9,.5)++(330:.1cm)
--++(330:.01cm) ++(330:.04cm) --++(330:.03cm);

\end{tikzpicture}
\caption{{\bf General experimental set up.} From left to right there are the preparation, transformation and measurement devices. As
soon as the release button is pressed, the preparation device outputs a physical system in the state specified by the knobs. The next device performs the transformation specified by its knobs (which in particular can be ``do nothing"). The device on the right performs the measurement specified by its knobs, and the outcome \mbox{($x$ or $\bar{x}$)} is indicated by the corresponding light.
\label{f11}}
\end{figure}


\subsection{States}\label{stmeas1}

{\em Definition of system.} To a setup like FIG. 1 we associate a system if for each configuration of the preparation, transformation and measurement devices, the relative frequencies of the outcomes tend to a unique probability distribution (in the large sample limit). 

The probability of a measurement outcome $x$ is denoted by $p(x)$. This outcome can be associated to a binary measurement which tells whether $x$ happens or not (this second event $\bar x$ has probability $p(\bar x) =1-p(x)$). The above definition of system allows to associate to each preparation procedure a list of probabilities for the outcomes of all measurements that can be performed on a system. As we show in Subsection~\ref{SubsecCD} below, our requirements imply that all these probabilities $p(x)$ are determined by a finite set of them; the smallest such set is used to represent the state
\begin{equation}\label{psi state}
	\psi= \left[ \begin{array}{c}
		1 \\ p(x_1) \\ \vdots \\ p(x_d)
	\end{array} \right]=
	\left[\begin{array}{c}
	   \psi^0  \\ \psi^1 \\ \vdots \\ \psi^d
	\end{array}
	\right]
	\in\ \s\ \subset\ \mathbb{R}^{d+1}.
\end{equation}
The measurement outcomes that characterize the state $x_1,\ldots,x_d$ are called {\em fiducial}, and in general, there is more than one set of them (for example, a $\frac 1 2$-spin particle in QT is characterized by the spin in any 3 linearly-independent directions). Note that each of the fiducial outcomes can correspond to a different measurement. The redundant component $\psi^0 =1$ is reminiscent of QT, where one of the diagonal entries of a density matrix is redundant, since they sum up to $1$. In fact $\psi^0\neq 1$ is sometimes used to represent unnormalized states, but not in this paper, where only normalized states are considered. The redundant component $\psi^0$ allows to use the tensor-product formalism in composite systems (Subsection \ref{multip_ns}), which simplifies the notation.

The set of all allowed states $\s$ is convex \cite{convex_book}, because if $\psi_1, \psi_2 \in\s$ then one can prepare $\psi_1$ with probability $q$ and $\psi_2$ with probability $1-q$, effectively preparing the state $q\psi_1 + (1-q)\psi_2$. The number of fiducial probabilities $d$ is equal to the (affine) dimension of $\s$, otherwise one fiducial probability would be functionally related to the others, and hence redundant.

Suppose there is a $\mathbb{R}^{d+1}$-vector $\psi \notin \s$ which is in the topological closure of $\s$ -- that is, $\psi$ can be approximated by states $\psi'\in\s$ to arbitrary accuracy. Since there is no observable physical difference between \emph{perfect preparation} and \emph{arbitrarily good preparation}, we will consider $\psi$ to be a valid state and add it to the state space. This does not change the physical predictions of the theory, but it has the mathematical consequence that state spaces become topologically closed. Since state vectors~(\ref{psi state}) are bounded, and we are in finite dimensions (shown in Subsection~\ref{SubsecCD}), state spaces $\s$ are \emph{compact} convex sets~\cite{convex_book}.

The {\em pure states} of a state space $\s$ are the ones that cannot be written as mixtures: $\psi \neq q\psi_1 +(1-q)\psi_2$ with $\psi_1 \neq \psi_2$ and $0<q<1$. Since $\s$ is compact and convex, all states are mixtures of pure states  \cite{convex_book}.

\subsection{Measurements}\label{stmeas2}

The probability of measurement outcome $x$ when the system is in state $\psi\in\s$ is given by a function $\Omega_x(\psi)$. Suppose the system is prepared in the mixture \mbox{$q\psi_1 +(1-q) \psi_2$}, then the relative frequency of outcome $x$ does not depend on whether the label of the actual preparation $\psi_k$ is ignored before or after the measurement, hence
\[
	\Omega_x \!\big( q\psi_1 +(1-q) \psi_2 \big) = 
	q\, \Omega_x (\psi_1) +(1-q)\, \Omega_x (\psi_2)\ .
\]
This means that the function $\Omega_x$ is affine on $\s$. The redundant component $\psi^0$ in (\ref{psi state}) allows to write this function as a linear map $\Omega_x: \mathbb{R}^{d+1} \to \mathbb{R}$ \cite{tomo,Barr}.

An \emph{effect} is a linear map $\Omega: \mathbb{R}^{d+1} \to\mathbb{R}$ such that $\Omega(\psi) \in [0,1]$ for all states $\psi\in\s$. Every function $\Omega_x$ associated to an outcome probability $p(x)$ is an effect. The converse is not necessarily true: the framework of GPTs allows to construct theories where some effects do not represent possible measurement outcomes. These restrictions are analogous to superselection rules, where some (mathematically well-defined) states are not allowed by the physical theory. This is related to Requirement \ref{a.effects}.
A {\em tight effect} $\Omega$ is one for which there are two states $\psi_0,\psi_1 \in\s$ satisfying $\Omega(\psi_0)=0$ and $\Omega(\psi_1)=1$. 

An $n$-outcome measurement is specified by $n$ effects $\Omega_1,\ldots, \Omega_n$ such that $\Omega_1(\psi)+\cdots +\Omega_n(\psi) =1$ for all $\psi\in\s$. The number $\Omega_a (\psi)$ is the probability of outcome $a$ when the measurement is performed on the state $\psi$. The states $\psi_1,\ldots, \psi_n$ are {\em distinguishable} if there is an $n$-outcome measurement such that $\Omega_a (\psi_{b}) = \delta_{a,b}$, where
$\delta_{a,b}=1$ if $a=b$, and $\delta_{a,b}=0$ if $a\neq b$.

The {\em capacity} of a state space $\s$ is the size of the largest family of distinguishable states, and is denoted by $c$. This is the amount of classical information that can be transmitted by the corresponding type of system, in a single-shot error-free procedure. (In QT the capacity of a system is the dimension of its corresponding Hilbert space; which must not be confused with the dimension of the state space $d= c^2-1$, that is, the set of $c\times c$ complex matrices that are positive and have unit trace.) A {\em complete measurement} on $\s$ is one capable of distinguishing $c$ states.

\subsection{Transformations}\label{s.trans}

Each type of system has associated to it: a state space, a set of measurements, and a set of transformations. A transformation $T$ is a map $T: \s \to\s$. Similarly as for measurements, if a state is prepared as a mixture $q\psi_1+(1-q)\psi_2$, it does not matter whether the label of the actual preparation $\psi_k$ is ignored before or after the transformation. Hence
\[
   T\left(q\psi_1+(1-q)\psi_2\right)=q T(\psi_1)+(1-q) T(\psi_2)\ ,
\]
which implies that $T$ is an affine map. The redundant component $\psi^0$ in (\ref{psi state}) allows to extend $T$ to a linear map $T: \mathbb{R}^{d+1} \to \mathbb{R}^{d+1}$ \cite{tomo,Barr}.

A transformation $T$ is reversible if its inverse $T^{-1}$ exists and belongs to the set of transformations allowed by the theory. The set of (allowed) reversible transformations of a particular state space $\s$ forms a group $\g$. For the same reason as for the state space itself, we will assume that the group of reversible transformations is topologically closed. Previously we have seen that a state space $\s$ is bounded, hence the corresponding group of transformations $\g$ is bounded, too. In summary, groups of transformations are compact \cite{matrix_groups}.

\subsection{Composite systems}\label{multip_ns}

{\em Definition of composite system.} Two systems $A,B$ constitute a composite system, denoted $AB$, if a measurement for $A$ together with a measurement for $B$ uniquely specifies a measurement for $AB$. This means that
if $x$ and $y$ are measurement outcomes on $A$ and $B$ respectively, the pair $(x,y)$ specifies a unique measurement outcome on $AB$, whose probability distribution $p(x,y)$ does not depend on the temporal order in which the subsystems are measured.

The fact that subsystems are themselves systems implies that each has a well-defined reduced state $\psi_{A}, \psi_B$ which does not depend on which transformations and measurements are performed on the other subsystem (see definition of system in Subsection \ref{stmeas1}). This is often referred to as no-signaling. Let $x_1,\ldots, x_{d_A}$ be the fiducial measurements of system $A$, and $y_1,\ldots, y_{d_B}$ the ones of $B$. The no-signaling constraints are
\begin{equation} \label{nosig}
\begin{array}{lll}
	p(x_i) &=& p(x_i,y_j) +p(x_i,\bar{y}_j)\\
	p(y_i) &=& p(x_i,y_j) +p(\bar x_i,y_j)
\end{array}
\end{equation}
for all $i,j$. 

An assumption which is often postulated additionally in the GPT context is Requirement~\ref{a.correlations}, which says
that the state of a composite system is completely characterized by the statistics of measurements on the subsystems, that is, $p(x,y)$. This and no-signaling (\ref{nosig}) imply that states in $AB$ can be represented on the tensor product vector space~\cite{tomo} as
\begin{equation}\label{psiAB}
	\psi_{AB} = \left( \begin{array}{c} 
	1 \\ \vdots\\ p(x_i)\\ \vdots\\ p(y_j)\\ \vdots\\ p(x_i, y_j)\\ \vdots 
	\end{array} \right)\ \in\ \s_{AB} \subset \mathbb{R}^{d_A+1}\! \otimes
\mathbb{R}^{d_B +1} .
\end{equation}
The joint probability of two arbitrary local measurement outcomes $x,y$ is given by
\begin{equation}\label{localmeas}
	p(x,y)= (\Omega_x \otimes \Omega_y) (\psi_{AB})\ ,
\end{equation}
where $\Omega_x$ is the effect representing $x$ in $A$, that is $p(x)= \Omega_x (\psi_A)$, and analogously for $\Omega_y$ \cite{tomo}. (The term ``local" is used when referring to subsystems, and has nothing to do with spatial locations.) In other words, if $\{ \Omega^A_1, \ldots, \Omega^A_n \}$ is an $n$-outcome measurement on $A$, and $\{ \Omega^B_1, \ldots, \Omega^B_m \}$ is an $m$-outcome measurement on $B$, then \mbox{$\{ \Omega^A_a \otimes \Omega^B_b |$ $ a=1,\ldots,n\, ; b=1,\ldots,m\}$} defines a measurement on $AB$ with $nm$ outcomes.
Local transformations act on the global state as
\begin{equation}\label{localtrans}
	\psi_{AB}\ \rightarrow\  
	(T_A \otimes T_B)(\psi_{AB})\ ,
\end{equation}
where $T_A$ is the matrix that represents the transformation in $A$, and analogously for $T_B$ \cite{tomo}. The reduced states
\begin{equation}\label{red-state}
	\psi_A = \left( \begin{array}{c} 
	1 \\ \vdots\\ p(x_i)\\ \vdots
	\end{array} \right)\ ,\quad
	\psi_B = \left( \begin{array}{c} 
	1 \\ \vdots\\ p(y_j)\\ \vdots
	\end{array} \right)\ ,
\end{equation}
are obtained from $\psi_{AB}$ by picking the right components (\ref{psiAB}). Alternatively, reduced states can be defined by $\Omega_A(\psi_A)= (\Omega_A \otimes \mathbf{1}) (\psi_{AB})$ for any effect $\Omega_A$ in $A$, where $\mathbf{1} (\psi_B) =\psi_B^0$ is the unit effect. The reduced state $\psi_A$ must belong to the state space of subsystem $A$, denoted $\s_A$, and any state in $\s_A$ must be the reduction of a state from $\s_{AB}$. (Analogously for subsystem B.) This implies that all product states
\begin{equation}\label{prod-state}
	\psi_{AB}= \psi_A \otimes \psi_B
\end{equation}
are contained in $\s_{AB}$ \cite{tomo}, and similarly, all tensor products of local measurements and transformations are allowed on $AB$.

Given two fixed state spaces $\s_A$ and $\s_B$, the previous discussion imposes constraints on the state space of the composite system $\s_{AB}$. However, there are still many different possible joint state spaces $\s_{AB}$, and some of them allow for larger violations of Bell inequalities than QT. In fact, this has been extensively studied \cite{nav,vandam,boxworld,ic,pr,Barn,Barr}, and is one of the reasons for the popularity of generalized probabilistic theories.

Nothing prevents Bob's system from being composite itself; hence one can recursively extend the definition of composite system and formulas (\ref{psiAB}), (\ref{localmeas}), (\ref{localtrans}), and (\ref{prod-state}) to more parties. 

\subsection{Equivalent state spaces} \label{equivss}

Let ${\cal L}: \s \to \s'$ be an invertible affine map. If all states are transformed as $\psi \rightarrow {\cal L} (\psi)$, and all effects on $\s$ are transformed as $\Omega \rightarrow \Omega\circ {\cal L}^{-1}$, then the outcome probabilities $\Omega(\psi)$ are kept unchanged. Analogously, if all transformations on $\s$ are mapped as $T \rightarrow {\cal L}\circ T\circ {\cal L}^{-1}$ then their action on the states is the same. The new state space $\s'$, together with the transformed effects and transformations, is then just a different representation of $\s$. In this case, we call $\s$ and $\s'$ \emph{equivalent}.
In the new representation, the entries of $\psi$ need not be probabilities as in~(\ref{psi state}), but it may have other advantages. In this work, several representations are used.

In the standard formalism of QT, states are represented by density matrices, however they can also be represented as in (\ref{psi state}). 

Changing the set of fiducial measurements is a particular type of ${\cal L}$-transformation. For example, if the components of the Bloch vector (of a quantum \mbox{spin-$\frac 1 2$} particle) correspond to spin measurements in non-orthogonal directions, then the Bloch sphere becomes an ellipsoid.

\subsection{Instances of generalized probabilistic theories}\label{s.instances}

QT is an instance of GPT, and can be specified as follows. The state space $\s_c$ with capacity $c$ is equivalent to the set of complex $c\!\times\! c$-matrices $\rho$ such that $\rho \geq 0$ and ${\rm tr} \rho =1$. This set has dimension $d_c =$ \mbox{$c^2 -1$}, and its pure states are rank-one. The effects on $\s_c$ have the form $\Omega(\rho)= \mathrm{tr} (M\rho)$, where $M$ is a complex $c\! \times\! c$-matrix such that $0\leq M \leq \mathbb{I}$. The reversible transformations act as $\rho \rightarrow V\rho V^\dagger$ with $V\in \mathrm{SU}(c)$. The capacity of a composite system $AB$ is the product of the capacities for the subsystems $c_{AB} = c_A c_B$.

CPT is another instance of GPT, and can be specified as follows. The state space $\s_c$  with capacity $c$ is equivalent to the set of $c$-outcome probability distributions $[p(1), \ldots, p(c)]$, which has dimension $d_c =c-1$ (in geometric terms, each $\s_c$ is a simplex). The pure states are the deterministic distributions $p(a) = \delta_{a,b}$ with $b=1,\ldots, c$. The $c$-outcome measurement with effects $\Omega_a (\psi) =p(a)$ for $a=1,\ldots, c$, distinguishes the $c$ pure states, hence it is complete. Any other measurement is a function of this one. The reversible transformations act by permuting the entries of the state $[p(1), \ldots, p(c)]$. The capacity of a composite system is also $c_{AB} = c_A c_B$. Note that CPT can be obtained by restricting the states of QT to diagonal matrices. In other words, CPT is embedded in QT.

An instance of GPT that is not observed in nature is \emph{generalized no-signaling theory}~\cite{Barr}, colloquially called \emph{boxworld}. By definition, state spaces contain all correlations~(\ref{psiAB}) satisfying the no-signaling constraints~(\ref{nosig}). Such state spaces have finitely many pure states, and some of them violate Bell inequalities stronger than any quantum state~\cite{pr}. The effects in boxworld are all generated by products of local effects. The group of reversible transformations consists only of relabellings of local measurements and their outcomes, permutations of subsystems, and combinations thereof~\cite{boxworld}.

\section{The requirements} 

This section contains the precise statement of the requirements, each followed by explanations about its significance. 

\begin{requirement}[Finiteness]\label{a.finiteness} 
A state space with capacity $c=2$ has finite dimension $d$.
\end{requirement}

If this did not hold, the characterization of a state of a generalized bit would require infinitely many outcome probabilities, making state estimation impossible. It is shown below that this requirement, together with the others, implies that all state spaces with finite capacity $c$ have finite dimension.

\begin{requirement} [Local tomography] \label{a.correlations} 
	The state of a composite system $AB$ is completely characterized by the statistics of measurements on the subsystems $A,B$.
\end{requirement}

In other words, state tomography \cite{tomo} can be performed locally. This is equivalent to the constraint 
\begin{equation}\label{h dim}
	(d_{AB}+1)= (d_A+1) (d_B+1)
\end{equation}
\cite{Hardy5,tomo}. This requirement can be recursively extended to more parties by letting subsystems $A,B$ to be themselves composite. 

\begin{requirement}[Equivalence of subspaces] \label{a.subspaces} 
	Let $\s_c$ and $\s_{c-1}$ be systems with capacities $c$ and $c-1$, respectively. If $\Omega_1,\ldots,\Omega_c$ is a complete measurement on $\s_c$, then the set of states $\psi\in\s_c$ with $\Omega_c(\psi)=0$ is equivalent to $\s_{c-1}$.
\end{requirement}

The notions of complete measurements and equivalent state spaces are defined in Subsections \ref{stmeas2} and \ref{equivss}.
In particular, equivalence of $\s_{c-1}$ and 
\begin{equation}\label{S'}
	\s'_{c-1} := \{\psi\in \s_c :\Omega_c (\psi)= 0\} \subset \s_c 
\end{equation}
implies that all measurements and reversible transformations on one of them can be implemented on the other. 

This requirement, first introduced in~\cite{Hardy5}, implies that all state spaces with the same capacity are equivalent: if $\s_{c-1}$ and $\tilde\s_{c-1}$ are state spaces with capacity $c-1$, then both are equivalent to (\ref{S'}), hence they are equivalent to each other. In other words, the only property that characterizes the type of system is the capacity for carrying information. If we start with $\s_c$ and apply Requirement~\ref{a.subspaces} recursively, we get a more general formulation: consider any subset of outcomes $\{a_1,\ldots,a_{c'}\}\subseteq\{1,\ldots,c\}$ of the complete measurement $\Omega_1, \ldots, \Omega_c$, then the set of states $\psi\in\s_c$ with
\begin{equation}\label{ss}
	\Omega_{a_1}(\psi) +\cdots +\Omega_{a_{c'}}(\psi) =1
\end{equation}
is equivalent to the state space $\s_{c'}$ with capacity $c'$. This provides an onion-like structure for all state spaces \mbox{$\s_1 \subset \s_2 \subset \s_3 \subset \cdots$}

The particular structure of QT simplifies the task of assigning a state space to a physical system or experimental setup. It is not necessary to consider all possible states of the system, but instead, the relevant ones for the context being analyzed. For example, an atom is sometimes modeled with a state space having two distinguishable states ($c=2$), even though its constituents have many more degrees of freedom. In particular, if we know that only two energy levels are populated with non-zero probability, we can ignore all others and effectively get a genuine quantum 2-level state space. In a theory where this is not true, the effective state space might depend on how many unpopulated energy levels are ignored, or on the detailed internal state of the electron, for example. In order to avoid pathologies like this, we postulate Requirement~\ref{a.subspaces}.

\begin{requirement}[Symmetry]\label{a.symmetry}
	For every pair of pure states $\psi_1,\psi_2 \in \s$ there is a reversible transformation $G$ mapping one onto the other: $G(\psi_1) = \psi_2$.
\end{requirement}

The set of reversible transformations of a state space $\s_c$ forms a group, denoted $\g_c$. This group endows $\s_c$ with a symmetry, which makes all pure states equivalent. A group $\g_c$ is said to be {\em continuous} if it is topologically connected: any transformation is the composition of many infinitesimal ones \cite{matrix_groups}. Hardy invokes the continuity of time-evolution in physical systems to justify the continuity of reversible transformations \cite{Hardy5,tomo}; in this case, state spaces $\s_c$ must have infinitely-many pure states; this rules out CPT and singles out QT. However, all the analysis in this work is done without imposing continuity, since we find it very interesting that the only theory with state spaces having finitely-many pure states, and satisfying the requirements, is CPT.

\begin{requirement}[All measurements allowed] \label{a.effects} 
All effects on $\s_2$ are outcome probabilities of possible measurements.
\end{requirement}

It is shown below that, in combination with the other requirements, this implies that all effects on all state spaces (with arbitrary $c$) appear as outcome probabilities of measurements in the resulting theory. Note that Requirement~\ref{a.effects} has non-trivial consequences in conjunction with the other requirements: adding effects as allowed measurements to a physical theory extends the applicability of
Requirement~\ref{a.subspaces}.

For completeness, we would like to mention that Requirement~\ref{a.effects} can be replaced by the following postulate, which has first been put forward in an interesting paper that appeared after completion of this work~\cite{Giulio}. It calls a state ``completely mixed'' if it is in the relative interior of state space. See Lemma~\ref{LemFive} in the appendix for how the proof of our main result has to be modified in this case.

\bigskip\noindent
{\bf Requirement 5'~\cite{Giulio}.}  If a state is not completely mixed, then there exists at least one state that can be perfectly distinguished from it.

\section{Characterization of all theories satisfying the requirements}\label{sqt}

\subsection{The maximally-mixed state}

We use the following notation: the system with capacity $c$ has state space $\s_c$ with dimension $d_c$ and group of reversible transformations $\g_c$.
The group $\g_c$ is compact (Section \ref{s.trans}), and hence, has a normalized invariant Haar measure \cite{Haar-book}. This allows to define the maximally-mixed state
\begin{equation}\label{maxmix}
	\mu_c = \int_{\g_c}\!\! G(\psi)\, dG\ \in \s_c\ ,
\end{equation}
where $\psi\in \s_c$ is an arbitrary pure state. It follows from Requirement~\ref{a.symmetry} that the resulting state $\mu_c$ does not depend on the choice of the pure state $\psi$. By construction, the maximally-mixed state is invariant:
\begin{equation}\label{inv state}
	G(\mu_c) =\mu_c\ \mbox{ for all }\ G\in \g_c\ . 
\end{equation}
Moreover, Lemma~\ref{LemMaxMixUnique} shows that it is the only invariant state in $\s_c$ (this lemma and all others are stated and proven in the appendix).

\subsection{The generalized bit}\label{gener-bit}

A generalized bit is a system with capacity two. For any state $\psi\in \s_2$ in the standard representation (\ref{psi state}), its Bloch representation is defined by
\begin{equation}\label{hat psi}
	\hat\psi= 2 \left[ \begin{array}{c}
		p(x_1) - \mu^1_2 \\ \vdots \\ 
		p(x_{d_2}) - \mu_2^{d_2} 
	\end{array} \right]\
	\in \hat\s_2\ \subset \mathbb{R}^{d_2}\ .
\end{equation}
States in the Bloch representation do not have the redundant component $\psi^0$, so equations (\ref{localmeas}, \ref{localtrans}, \ref{prod-state}) become less simple. The invertible map ${\cal L}: \s_2 \rightarrow \hat\s_2$ is affine but not linear; hence, effects $\Omega$ in the Bloch representation ($\hat\Omega = \Omega \circ {\cal L}^{-1}$) are affine but not necessarily linear. The same applies to transformations $(\hat G= {\cal L}\circ G\circ {\cal L}^{-1})$, however, the maximally-mixed state in the Bloch representation is the null vector $\hat\mu_2 =\z$, therefore (\ref{inv state}) becomes $\hat G (\z) =\z$, which implies that $\hat G$ acts linearly (as a matrix).

\begin{theorem}\label{pure=boundary}
	A state in $\hat\s_2$ is pure if and only if it belongs to the boundary $\partial \hat\s_2$.
\end{theorem}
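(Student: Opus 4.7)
The easy direction, pure $\Rightarrow$ boundary, is immediate: if a pure state $\hat\psi$ lay in the interior of $\hat\s_2$, a small open ball around it would be contained in $\hat\s_2$, giving a nontrivial decomposition $\hat\psi=\frac{1}{2}(\hat\psi+\epsilon v)+\frac{1}{2}(\hat\psi-\epsilon v)$ with both summands in $\hat\s_2$ and in contradiction with the definition of a pure state. For the converse I plan to turn each boundary point into the unique zero of an allowed effect via a supporting hyperplane and then push this through Requirements~\ref{a.subspaces} and~\ref{a.effects}.

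Given $\hat\psi\in\partial\hat\s_2$, the hyperplane separation theorem supplies an affine functional $\Omega'$ satisfying $\Omega'(\hat\psi)=0$ and $\Omega'\geq 0$ on $\hat\s_2$. Compactness of $\hat\s_2$ ensures that $\Omega'$ attains a strictly positive maximum; after rescaling I obtain an effect $\Omega$ with $\Omega(\hat\psi)=0$, $\Omega\leq 1$ on $\hat\s_2$, and $\Omega(\hat\phi)=1$ for some $\hat\phi\in\hat\s_2$. Requirement~\ref{a.effects} promotes the two-outcome measurement $\{\Omega,1-\Omega\}$ to an allowed one, and because it perfectly distinguishes $\hat\psi$ from $\hat\phi$ it is complete on the capacity-two system. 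Requirement~\ref{a.subspaces} then identifies the face $F:=\{\hat\chi\in\hat\s_2:\Omega(\hat\chi)=0\}$ with $\hat\s_1$, the state space of a capacity-one system, which consists of a single state; hence $F=\{\hat\psi\}$. Any convex decomposition $\hat\psi=q\hat\psi_1+(1-q)\hat\psi_2$ with $\hat\psi_i\in\hat\s_2$ and $0<q<1$ then forces $\Omega(\hat\psi_1)=\Omega(\hat\psi_2)=0$ by nonnegativity, so $\hat\psi_1=\hat\psi_2=\hat\psi$, which establishes purity.

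The main technical point is arranging the supporting functional so that the associated two-outcome measurement is complete in the sense of Requirement~\ref{a.subspaces}: one needs both the minimum value $0$ to be attained at $\hat\psi$ and some state $\hat\phi$ attaining $1$, which compactness of $\hat\s_2$ delivers through rescaling. A secondary point I will invoke is the framework fact that a capacity-one state space contains only one state; this is manifest in the CPT and QT templates of Subsection~\ref{s.instances} and is implicit in the way Requirement~\ref{a.subspaces} is being applied here.
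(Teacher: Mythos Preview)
Your argument has a genuine gap at the step where you assert that ``the state space of a capacity-one system \ldots\ consists of a single state.'' In the GPT framework this is not a primitive fact: a system with capacity $1$ could a priori carry many (mutually indistinguishable) states, and nothing in Requirements~\ref{a.finiteness}--\ref{a.symmetry} rules this out directly. Citing the CPT/QT templates of Subsection~\ref{s.instances} is circular, since the whole point of the derivation is to recover those structures from the Requirements rather than assume them.

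The paper's proof supplies precisely the missing ingredient. Before looking at an arbitrary boundary point, it first invokes a geometric fact due to Straszewicz: every compact convex set admits a supporting hyperplane that touches it in exactly \emph{one} point. This yields a tight effect $\hat\Omega_\mathrm{one}$ whose level set $\{\hat\Omega_\mathrm{one}=1\}$ is a singleton $\{\hat\varphi_\mathrm{one}\}$. Requirement~\ref{a.effects} makes $\{\hat\Omega_\mathrm{one},\,\hat\mathbf{1}-\hat\Omega_\mathrm{one}\}$ an allowed complete measurement, and Requirement~\ref{a.subspaces} then forces $\s_1$ to be equivalent to that singleton. Only after this is established does the face argument at a generic boundary point go through (the paper runs it by contradiction, you run it directly, but the content is the same). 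Your proof becomes correct once you insert the exposed-point step to justify that $\s_1$ is a single state; without it, the conclusion $F=\{\hat\psi\}$ is unsupported.
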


\begin{proof}
In any convex set, pure states belong to the boundary \cite{convex_book}. Let us see the converse.

It is shown in \cite{Strask} that any compact convex set has a supporting hyperplane containing exactly one point of the set. Translated to our language: there is a tight effect $\hat\Omega_\mathrm{one}$ on $\hat\s_2$ such that only one state $\hat\varphi_\mathrm{one} \in \hat\s_2$ satisfies $\hat\Omega_\mathrm{one} (\hat\varphi_\mathrm{one}) =1$; this is illustrated in FIG. \ref{f22}. According to Requirement~\ref{a.effects}, the effect $\hat\Omega_\mathrm{one}$ corresponds to a valid measurement outcome, and so does $\hat\mathbf{1} -\hat\Omega_\mathrm{one}$, where $\hat\mathbf{1} (\hat\psi)=1$ for all $\hat\psi \in \hat\s_2$. Thus, the two effects $\hat\Omega_\mathrm{one}$ and $\hat\mathbf{1} -\hat\Omega_\mathrm{one}$ define a complete measurement on $\hat\s_2$. Imposing Requirement \ref{a.subspaces} on the single outcome $\hat\Omega_\mathrm{one}$ constrains the state space with unit capacity $\hat\s_1$ to contain only one state.

Suppose there is a point in the boundary $\hat \varphi_\mathrm{mix} \in\partial \hat\s_2$ which is not pure: $\hat \varphi_\mathrm{mix}= q \hat \varphi_1 +(1-q) \hat \varphi_2$ with $\hat \varphi_1 \neq \hat \varphi_2$ and $0<q<1$. Every point in the boundary of a compact convex set has a supporting hyperplane which contains it \cite{convex_book}. In our language: there is a tight effect $\hat\Omega$ on $\hat\s_2$ such that $\hat \Omega (\hat \varphi_\mathrm{mix}) =1$. The affine function $\hat\Omega$ is bounded: $\hat\Omega (\hat \varphi) \leq 1$ for any $\hat \varphi \in\hat \s_2$, which implies $\hat\Omega (\hat \varphi_1)= \hat\Omega (\hat \varphi_2)= 1$; this is illustrated in FIG. \ref{f22}. Like $\hat\Omega_\mathrm{one}$, the effect $\hat\Omega$ defines a complete measurement, and Requirement \ref{a.subspaces} can be imposed on the single outcome $\hat\Omega$, implying that $\hat\s_1$ contains more than one state. This is in contradiction with the previous paragraph; hence, all points in the boundary are pure. 
\end{proof}

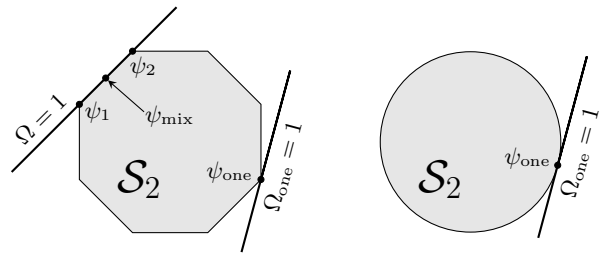
\begin{figure}

\begin{tikzpicture}[>=stealth]
	\filldraw[fill=white!90!black, draw=black] (0,0)--++(90:1cm)--++(45:1cm)--++(0:1cm)--++(-45:1cm) --++(-90:1cm)--++(-135:1cm)--++(-180:1cm) --cycle;

	\draw (.8,0) node {\LARGE ${\cal S}_2$};

	\draw[thick] (-.9,.1)--+(45:3.1cm) node[near start,sloped,above] {$\Omega =1$};
	\draw[fill=black] (0,1) circle (.04cm);
	\draw (0.25,0.9) node {$\psi_1$};

	\draw[fill=black] (.7071,1.7071) circle (.04cm);
	\draw (.85,1.5) node {$\psi_2$};

	\draw[fill=black] (.35,1.35) circle (.04cm);
	\draw (1.2,.85) node {$\psi_\mathrm{mix}$};
	\draw[->] (.86,.9) --(.38,1.32); 

	\draw[thick] (2.41421,0)--+(75:1.5cm)--+(255:1cm) node[midway,sloped,below] {$\Omega_\mathrm{one} =1$};
	\draw[fill=black] (2.41421,0) circle (.04cm);
	\draw (2,0.1) node {$\psi_\mathrm{one}$};

	\filldraw[fill=white!90!black, draw=black] (5.2,.5) circle (1.2cm);

	\draw (4.8,0) node {\LARGE ${\cal S}_2$};

	\draw[thick] (5.2,.5)++(-15:1.2cm) --+(75:1.5cm) --+(255:1cm) node[midway,sloped,below] {$\Omega_\mathrm{one} =1$};
	\draw[fill=black] (5.2,.5)++(-15:1.2cm) circle (.04cm);
	\draw (4.8,.6)++(-15:1.2cm) node {$\psi_\mathrm{one}$};

\end{tikzpicture}

\caption{The left figure is a state space whose boundary consists of facets (like $\Omega =1$). Each facet contains infinitely many states ($\Omega =1$ contains $\psi_1$, $\psi_2$ and all $\psi_\mathrm{mix} = q\psi_1 + (1-q) \psi_2$). The right figure is a state space whose boundary has no facets. Any state space has supporting hyperplanes containing a unique state (like $\Omega_\mathrm{one} =1$ in both figures). \label{f22}}

\end{figure}

For the case $d_2 =1$, the state space $\s_2$ is a segment (a 1-dimensional ball), hence the previous and next theorems are trivial. For $d_2 >1$, the previous theorem implies that $\s_2$ contains infinitely-many pure states. The next theorem recovers the (quantum-like) Bloch sphere with a yet unknown dimension $d_2$.

\begin{theorem}\label{bs}
	There is a set of fiducial measurements for which $\hat\s_2$ is a $d_2$-dimensional unit ball.
\end{theorem}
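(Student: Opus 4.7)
The strategy is to turn the compactness of $\g_2$ into an \emph{invariant inner product} in which every $\hat G \in \g_2$ acts as an orthogonal transformation, then combine orthogonality with transitivity on the boundary of $\hat\s_2$ to conclude that the boundary is a Euclidean sphere, and hence $\hat\s_2$ a ball.

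First I would dispose of the trivial case $d_2=0$ and henceforth assume $d_2\geq 1$. In this regime the origin $\hat\mu_2=\z$ must lie in the \emph{interior} of $\hat\s_2$: otherwise, by Theorem~\ref{pure=boundary}, $\z$ would be a pure state, and then the invariance~(\ref{inv state}) together with Requirement~\ref{a.symmetry} would force every pure state to equal $\z$, so $\hat\s_2=\{\z\}$, contradicting $d_2\geq 1$.

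Second, since $\g_2$ is compact (Section~\ref{s.trans}), I would average the standard Euclidean inner product against the Haar measure to obtain
\[
	\langle u,v\rangle_\g \,:=\, \int_{\g_2}\!\langle \hat G u,\hat G v\rangle\, dG,
\]
which is positive definite and preserved by every $\hat G\in\g_2$. Passing to a basis orthonormal with respect to $\langle\cdot,\cdot\rangle_\g$ is an invertible linear change of coordinates on $\mathbb{R}^{d_2}$; by Subsection~\ref{equivss} together with Requirement~\ref{a.effects} (which certifies that any $d_2$ affinely independent effects can serve as fiducials), this amounts to replacing the original fiducial measurements by a new set. In that representation, $\g_2\subseteq O(d_2)$ acts by standard orthogonal matrices. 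Now Requirement~\ref{a.symmetry} and Theorem~\ref{pure=boundary} say that $\g_2$ acts transitively on $\partial\hat\s_2$; since orthogonal maps preserve Euclidean length, every boundary point has the same norm $r>0$, so $\partial\hat\s_2\subseteq\{v:\|v\|=r\}$. Because $\z$ is interior to the compact convex body $\hat\s_2$, each ray from $\z$ meets the boundary in exactly one point, which must lie at distance $r$; hence $\hat\s_2$ coincides with the closed ball of radius $r$, and an isotropic rescaling of the fiducials by $1/r$ delivers the unit ball.

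The main obstacle I anticipate is bookkeeping rather than conceptual: one must be careful to verify that the two coordinate changes used above (the basis switch that diagonalizes $\langle\cdot,\cdot\rangle_\g$ and the final rescaling by $1/r$) can genuinely be realized by a legitimate change of fiducial measurements in the sense of equations~(\ref{psi state}) and~(\ref{hat psi}), given that the components $\hat\psi^i$ are derived from probabilities rather than arbitrary real numbers. This is exactly what Requirement~\ref{a.effects} buys us. Once that is granted, the argument that a compact linear group acting transitively on the boundary of a convex body containing the origin in its interior forces that body to be an ellipsoid is entirely standard, and the symmetry of the setup fixes the ellipsoid to be a ball.
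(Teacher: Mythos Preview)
Your proposal is correct and follows essentially the same route as the paper: both obtain an invariant inner product from compactness of $\g_2$ (the paper packages this as Lemma~\ref{l.SGS-1}, defining $S=\sqrt{\int G\t G\,dG}$, which is exactly your averaged form), then combine orthogonality with Theorem~\ref{pure=boundary} and Requirement~\ref{a.symmetry} to force the boundary onto a sphere. The only difference is that where you invoke Requirement~\ref{a.effects} abstractly to legitimize the coordinate change as a change of fiducials, the paper carries this out explicitly by exhibiting the new fiducial effects $\hat\Omega_{\hat\nu_i}(\hat\varphi')=(1+\hat\nu_i\t\hat\varphi')/2$ and checking that the resulting Bloch vector~(\ref{hat psi}) reproduces $\hat\varphi'$.
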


\begin{proof}
Lemma \ref{l.SGS-1} shows that there is an invertible real matrix $S$ such that for each $\hat G\in \hat\g_2$ the matrix $S \hat GS^{-1}$ is orthogonal. Let us redefine the set $\hat\s_2$ by transforming the states as $\hat\varphi \rightarrow \hat\varphi' = q S \hat\varphi$, where the number $q>0$ is chosen such that all pure states are unit vectors $|\hat\varphi'|^2 = \hat\varphi'^{\mbox{\tiny T}} \hat\varphi' =1$. This is possible because in the transformed state space, all pure states are related by orthogonal matrices $(SGS^{-1})$ which preserve the norm. Since Theorem~\ref{pure=boundary} also applies to the redefined set $\hat\s'_2$, it must be a unit ball. In what follows we define a new set of fiducial measurements $x'_i$ such that the Bloch representation (\ref{hat psi}) associated to the new fiducial probabilities $p(x'_i)$ coincides with the redefinition $\hat\varphi'$.

Requirement \ref{a.effects} tells that in $\hat\s'_2$, all tight effects are allowed measurements. For each unit vector $\hat\nu \in \mathbb{R}^{d_2}$ the function $\hat\Omega_{\hat\nu} (\hat\varphi')= (1+\hat\nu\t \hat\varphi')/2$ is a tight effect on the unit ball, and conversely, all tight effects on the unit ball are of this form. The new set of fiducial measurements $x'_i$ has effects $\hat\Omega_{x'_i} = \hat \Omega_{\hat\nu_i}$, where
\begin{equation}\label{thetai}
\hat\nu_1 = \left[ \begin{array}{c}
		1 \\ 0 \\ \vdots \\ 0
	\end{array} \right],\ 
\hat\nu_2 = \left[ \begin{array}{c}
		0 \\ 1 \\ \vdots \\ 0
	\end{array} \right],\ \ldots\ ,\ 
\hat\nu_{d_2} = \left[ \begin{array}{c}
		0 \\ 0 \\ \vdots \\ 1
	\end{array} \right] 
\end{equation}
is a fixed orthonormal basis for $\mathbb{R} ^{d_2}$. For any state $\hat\varphi'$ the new fiducial probabilities are $p(x'_i) = \Omega_{x'_i} (\hat\varphi')=$ \mbox{$(1+\hat \varphi'^i)/2$}, which implies $\hat\varphi'^i = 2[p(x'_i) -1/2]$. This is just (\ref{hat psi}) with the new fiducial measurements (note that $\hat\mu'_2  =\z$ and $\mu'^i_2 = \hat\Omega_{x'_i} (\z)=1/2$).
\end{proof}

In the rest of the paper, we will use the representation derived in Theorem~\ref{bs} above, where the generalized bit is represented by a unit ball. Moreover, we will drop the prime in $\hat\s'_2$, $x_i'$, $\hat\varphi'$ used in the proof, and simply write $\hat\s_2$, $x_i$, $\hat\varphi$.

As argued above, for each pure state $\varphi \in\s_2$ there is a binary measurement with associated effect
\begin{equation}\label{effect-state}
	\Omega_{\varphi} (\psi)= (1+\hat\varphi\t \hat\psi)/2\ ,
\end{equation}
such that $\hat \Omega_{\varphi} (\hat\varphi)=1$ and $\hat \Omega_{\varphi} (-\hat \varphi) =0$. In summary, there is a correspondence between tight effects and pure states in $\s_2$, and each pure state belongs to a distinguishable pair $\{\hat\varphi, -\hat\varphi\}$.

\subsection{Capacity and dimension}
\label{SubsecCD}

Requirements \ref{a.finiteness}, \ref{a.correlations} and \ref{a.subspaces} imply that a state space with finite capacity $c$ has finite dimension $d_c$, which generalizes Requirement~\ref{a.finiteness}. To see this, consider a system composed of $m$ generalized bits, with state space denoted by $\s_{2^{\times m}}$. Since $d_2$ is finite, equation (\ref{h dim}) implies that $\s_{2^{\times m}}$ has finite dimension. Due to the fact that perfectly distinguishable states are linearly independent, its capacity, denoted $c_m$, must be finite, too. Since systems with the same capacity are equivalent, we must have $c_m\neq c_n$ for $m\neq n$, and the sequence of integers $c_1, c_2, \ldots$ is unbounded. For any capacity $c$ there is a value of $m$ such that $c\leq c_m$, hence by Requirement \ref{a.subspaces} we have $\s_c\subset \s_{2^{\times m}}$, which implies that $\s_c$ is finite-dimensional.

In QT, the maximally-mixed state (\ref{maxmix}) has two convenient properties. First property: if $\mu_A$ and $\mu_B$ are the maximally-mixed states of systems $A$ and $B$, then the maximally-mixed state of the composite system $AB$ is 
\begin{equation}\label{mix prod}
	\mu_{AB} = \mu_A \otimes \mu_B\ .
\end{equation}
Second property: in the state space $\s_c$, there are $c$ pure distinguishable states $\psi_1, \ldots, \psi_c \in\s_c$ such that
\begin{equation}\label{mix decomp}
   \mu_c=\frac 1 c \sum_{a=1}^c \psi_a\ .
\end{equation}
Lemmas \ref{LemMaxMix} and \ref{LemMixture} show that these two properties hold for every theory satisfying our requirements. The following theorem exploits these properties to show that the capacity is multiplicative (one of the axioms in~\cite{Hardy5}).

\begin{theorem}
\label{LemCapacity}
If $c_A$ and $c_B$ are the capacities of systems $A$ and $B$, then the capacity of the composite system $AB$ is
\begin{equation}\label{claim1}
   c_{AB}=c_A c_B\ .
\end{equation}
\end{theorem}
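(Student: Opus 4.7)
My plan is to prove $c_{AB}=c_A c_B$ by a two-sided argument: a direct construction for the bound $c_{AB}\ge c_A c_B$ using products, and a matching upper bound extracted from the identities (\ref{mix prod}) and (\ref{mix decomp}) for the maximally-mixed state.

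For the lower bound, I would fix distinguishable pure states $\psi_1^A,\ldots,\psi_{c_A}^A\in\s_A$ and $\psi_1^B,\ldots,\psi_{c_B}^B\in\s_B$ with distinguishing measurements $\{\Omega_a^A\}$ and $\{\Omega_b^B\}$. The tensor-product measurement $\{\Omega_a^A\otimes\Omega_b^B\}_{a,b}$ is an allowed $c_A c_B$-outcome measurement on $\s_{AB}$ by the composite-system framework of Section~\ref{multip_ns}, and by direct evaluation
\[
(\Omega_a^A\otimes\Omega_b^B)(\psi_{a'}^A\otimes\psi_{b'}^B)=\Omega_a^A(\psi_{a'}^A)\,\Omega_b^B(\psi_{b'}^B)=\delta_{aa'}\delta_{bb'}.
\]
Hence the $c_A c_B$ product states are mutually distinguishable in $AB$, giving $c_{AB}\ge c_A c_B$.

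For the upper bound I would combine (\ref{mix prod}) with (\ref{mix decomp}) applied to $A$ and $B$ to obtain
\[
\mu_{AB}=\mu_A\otimes\mu_B=\frac{1}{c_A c_B}\sum_{a,b}\psi_a^A\otimes\psi_b^B,
\]
exhibiting $\mu_{AB}$ as a uniform mixture of $c_A c_B$ distinguishable product pures. Meanwhile (\ref{mix decomp}) applied directly to $\s_{AB}$ gives a second uniform decomposition $\mu_{AB}=\frac{1}{c_{AB}}\sum_k\Phi_k$ into $c_{AB}$ distinguishable pures $\Phi_k$, with distinguishing measurement $\{\tilde\Omega_k\}$. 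The aim is to show these two uniform decompositions of the same state must have equal cardinalities, whence $c_A c_B=c_{AB}$.

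The main obstacle is making that last step rigorous. Equating the two displays and acting by $\tilde\Omega_k$ one obtains $\sum_{a,b}\tilde\Omega_k(\psi_a^A\otimes\psi_b^B)=c_A c_B/c_{AB}$, and symmetrically $\sum_k(\Omega_a^A\otimes\Omega_b^B)(\Phi_k)=c_{AB}/(c_A c_B)$ for each $a,b$. Under the assumption $c_{AB}>c_A c_B$ the first sum is $<1$ and the second $>1$ on average. I would try to close the argument by exploiting Requirement~\ref{a.subspaces}, which identifies the face cut out by any subset of $c_A c_B$ outcomes of the complete measurement $\{\tilde\Omega_k\}$ with the canonical lower-capacity state space $\s_{c_A c_B}$, together with Requirement~\ref{a.effects} to treat the product effects as bona fide measurement outcomes; the goal is to show that the $c_{AB}-c_A c_B$ "extra" pures $\Phi_k$ cannot consistently be fitted into a space whose every uniform decomposition of $\mu_{AB}$ into distinguishable pures can already be realised by $c_A c_B$ product terms, yielding a contradiction.
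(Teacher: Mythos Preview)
Your lower bound and your expansion $\mu_{AB}=\frac{1}{c_A c_B}\sum_{a,b}\psi_a^A\otimes\psi_b^B$ match the paper exactly. The gap is in the upper bound. You correctly derive, for every $k$,
\[
\sum_{a,b}\tilde\Omega_k(\psi_a^A\otimes\psi_b^B)=\frac{c_A c_B}{c_{AB}},
\]
but then you have no mechanism to force this sum to be $\geq 1$. Your proposed route through Requirements~\ref{a.subspaces} and~\ref{a.effects} is not an argument: Requirement~\ref{a.effects} concerns only $\s_2$ and is irrelevant here (product effects are already allowed by the composite framework), and cutting out a capacity-$c_A c_B$ face via Requirement~\ref{a.subspaces} gives you no handle on whether the $\Phi_k$ outside that face exist. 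There is no general principle that two uniform decompositions of the same state into distinguishable pures must have equal length.

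The missing idea is Requirement~\ref{a.symmetry}. The paper picks a single $k_0$ with $\Omega_{k_0}(\mu_{AB})\leq 1/c_{AB}$ (pigeonhole; your stronger equality for all $k$ is fine too), then uses transitivity on pure states to find $G\in\g_{AB}$ with $G(\psi_{k_0})=\varphi_1^A\otimes\varphi_1^B$. Now $\Omega_{k_0}\circ G^{-1}$ is an effect that takes the value $1$ on the product pure $\varphi_1^A\otimes\varphi_1^B$, and by invariance of $\mu_{AB}$ one gets
\[
\frac{1}{c_{AB}}\ \geq\ (\Omega_{k_0}\circ G^{-1})(\mu_{AB})=\frac{1}{c_A c_B}\sum_{a,b}(\Omega_{k_0}\circ G^{-1})(\varphi_a^A\otimes\varphi_b^B)\ \geq\ \frac{1}{c_A c_B},
\]
since one term in the sum equals $1$ and the rest are nonnegative. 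That single rotation is what converts your identity into the needed inequality; without it the argument does not close.
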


\begin{proof}
Equation (\ref{mix decomp}) allows to write the maximally-mixed states of systems $A$ and $B$ as
\[
   \mu_A=\frac 1 {c_A}\sum_{a=1}^{c_A} \varphi_a^A\ ,
\quad \mu_B=\frac 1 {c_B}\sum_{b=1}^{c_B} \varphi^B_b\ ,
\]
where $\varphi^A_1,\ldots,\varphi^A_{c_A} \in \s_A$ are pure and distinguishable, and
$\varphi^B_1, \ldots, \varphi^B_{c_B} \in \s_B$ are pure and distinguishable, too. This and equation (\ref{mix prod}) imply
\begin{equation}\label{eqRepTP}
   \mu_{AB}=\mu_A\otimes \mu_B = \frac 1 {c_A c_B} \sum_{a=1}^{c_A} \sum_{b=1}^{c_B}
   \varphi^A_a\otimes \varphi^B_b\ .
\end{equation}
All states $\varphi^A_a \otimes \varphi^B_{b} \in \s_{AB}$ are distinguishable with the tensor-product measurement, therefore 
\begin{equation}\label{21}
	c_{AB}\geq c_A c_B\ . 
\end{equation}
Let $(\Omega_1, \ldots, \Omega_{c_{AB}})$ be a complete measurement on $AB$ which distinguishes the states $\psi_1, \ldots, \psi_{c_{AB}} \in \s_{AB}$; that is $\Omega_k (\psi_{k'}) =\delta_{k,k'}$. According to Lemma~\ref{l.pure} these states can be chosen to be pure. Since $\sum_{k=1}^{c_{AB}} \Omega_k(\mu_{AB}) =1$, there is at least one value of $k$, denoted $k_0$, such that 
\begin{equation}\label{i0}
	\Omega_{k_0} (\mu_{AB}) \leq 1/c_{AB}\ .
\end{equation}
The product of pure states $\varphi^A_1 \otimes \varphi^B_1$ is pure~\cite{tomo}, hence Requirement \ref{a.symmetry} tells that there is a reversible transformation $G\in \g_{AB}$ such that $G(\psi_{k_0}) = \varphi^A_1\otimes \varphi^B_1$. The measurement $(\Omega_1\circ G^{-1}, \ldots, \Omega_{c_{AB}} \circ G^{-1})$ distinguishes the states $G(\psi_1), \ldots, G(\psi_{c_{AB}})$. Inequality (\ref{i0}), the invariance of $\mu_{AB}$, expansion (\ref{eqRepTP}), the positivity of probabilities, and $(\Omega_{k_0} \circ G^{-1}) (\varphi^A_1 \otimes \varphi^B_1) =1$, imply
\begin{eqnarray*}
   \frac{1}{c_{AB}} &\geq& (\Omega_{k_0} \circ G^{-1}) (\mu_{AB}) 
	\\ &=& \frac{1}{c_A c_B} \sum_{a,b}\,  (\Omega_{k_0} \circ G^{-1}) (\varphi^A_a \otimes \varphi^B_b) \geq \frac 1 {c_A c_B}\ .
\end{eqnarray*}
This and (\ref{21}) imply (\ref{claim1}).
\end{proof}

It is shown in \cite{Hardy5} that the two multiplicativity formulas (\ref{h dim}) and (\ref{claim1}) imply the existence of a positive integer $r$ such that: for any $c$ the state space $\s_c$ has dimension
\begin{equation}\label{d-c-relation}
	d_c = c^r -1\ .
\end{equation}
The integer $r$ is a constant of the theory, with values $r=1$ for CPT and $r=2$ for QT.

\subsection{Recovering classical probability theory}

Let us consider all theories with $d_2=1$. In this case, equation (\ref{d-c-relation}) becomes $d_c =c-1$. In~\cite{Hardy5}, it is shown that the only GPT with this relation between capacity and dimension is CPT, as described in Subsection~\ref{s.instances}. We reproduce the proof for completeness.

\begin{theorem}
The only GPT with $d_2=1$ satisfying Requirements 1--5 is classical probability theory.
\end{theorem}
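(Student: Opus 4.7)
The plan is to show that $d_2=1$ forces each $\s_c$ to be a $(c-1)$-simplex, which is precisely the state space of CPT described in Subsection \ref{s.instances}. Starting from $d_2=1$ and the multiplicativity relation $d_c=c^r-1$, the equation $1=2^r-1$ forces $r=1$, so $d_c=c-1$ for every capacity $c$. The whole task then reduces to a geometric identification: a compact convex set of affine dimension $c-1$ that contains $c$ distinguishable pure states with the appropriate effect structure must coincide with the convex hull of those $c$ states.

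Concretely, I would first invoke Lemma \ref{LemMixture} (the quoted equation (\ref{mix decomp})) to produce $c$ pure distinguishable states $\psi_1,\ldots,\psi_c\in\s_c$ whose uniform mixture is $\mu_c$. Because they are distinguishable there exist effects $\Omega_1,\ldots,\Omega_c$ with $\Omega_a(\psi_b)=\delta_{a,b}$ and $\sum_a\Omega_a=\mathbf{1}$. In particular the $\psi_a$ are affinely independent (any nontrivial affine relation among them would be detected by some $\Omega_a$), so their convex hull is a genuine $(c-1)$-simplex $\Delta\subseteq\s_c$. Since $\dim\s_c=c-1=\dim\Delta$, the two compact convex sets live in the same $(c-1)$-dimensional affine space, and the remaining task is to rule out points of $\s_c$ lying outside $\Delta$.

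This is where the effects do the work: for any $\psi\in\s_c$, the numbers $\lambda_a:=\Omega_a(\psi)$ are non-negative (because $\Omega_a$ is an effect) and sum to one, so $\psi':=\sum_a\lambda_a\psi_a$ is a point of $\Delta$ satisfying $\Omega_a(\psi')=\lambda_a=\Omega_a(\psi)$ for every $a$. Because the $\Omega_a$ separate points in the $(c-1)$-dimensional affine hull of the $\psi_a$ (they form a basis of the dual, modulo the normalization constraint) and $\psi,\psi'$ both lie in that affine hull, we conclude $\psi=\psi'\in\Delta$. Thus $\s_c=\Delta$, i.e.\ the state space is a simplex with vertices $\psi_1,\ldots,\psi_c$.

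It remains to verify that the allowed effects, measurements and reversible transformations are the classical ones. The pure states being the vertices of a simplex means the only reversible transformations are the affine symmetries of $\Delta$, which are precisely the $c!$ permutations of the vertices, and Requirement \ref{a.symmetry} is automatically satisfied by the transitive action of $S_c$. The extreme effects on a simplex are exactly the coordinate functionals $\Omega_a$, so any effect is a convex combination $\sum_a \alpha_a\Omega_a$ with $\alpha_a\in[0,1]$, reproducing the classical measurement structure. The main obstacle I anticipate is the clean justification that the $\Omega_a$ separate points within the affine hull of $\s_c$; once one records that they satisfy $\sum_a\Omega_a=\mathbf{1}$ and are independent as affine functions on a $(c-1)$-dimensional affine space, the rest is essentially bookkeeping against the description of CPT in Subsection \ref{s.instances}.
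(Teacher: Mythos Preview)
Your simplex argument is essentially the paper's own: show the $c$ distinguishable states $\psi_1,\ldots,\psi_c$ are linearly (equivalently, affinely) independent in $\mathbb{R}^{d_c+1}=\mathbb{R}^c$, expand an arbitrary $\psi$ in that basis, and read off the coefficients as $\Omega_a(\psi)\in[0,1]$. Invoking Lemma~\ref{LemMixture} is unnecessary---the mere existence of a complete measurement (definition of capacity) already supplies the $\psi_a$ and $\Omega_a$---but it does no harm. Your description of the effect structure is slightly off (an effect is $\sum_a h_a\Omega_a$ with $h_a\in[0,1]$, not a \emph{convex} combination; the extreme effects are all $0/1$-vertex indicators, not just the $\Omega_a$), but the conclusion stands.

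The genuine gap is in the reversible transformations. You correctly observe that $\g_c$ must be a subgroup of the affine symmetry group $S_c$ of the simplex, and Requirement~\ref{a.symmetry} forces it to be a \emph{transitive} subgroup. But transitive subgroups of $S_c$ abound (e.g.\ the cyclic group $\mathbb{Z}/c\mathbb{Z}$), and a GPT with such a smaller group would not be CPT as described in Subsection~\ref{s.instances}. Your sentence ``Requirement~\ref{a.symmetry} is automatically satisfied by the transitive action of $S_c$'' is the wrong direction: you must show the requirements force $\g_c=S_c$, not that $S_c$ happens to satisfy Requirement~\ref{a.symmetry}. The paper closes this gap using Requirement~\ref{a.subspaces} inductively: the swap on $\s_2$ exists by Requirement~\ref{a.symmetry}; equivalence of $\s'_{c-1}\subset\s_c$ with $\s_{c-1}$ then lifts every transposition of $\s_{c-1}$ to a transformation on $\s_c$ fixing the remaining vertex; hence all transpositions lie in $\g_c$, and these generate $S_c$. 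You need this (or an equivalent) argument.
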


\begin{proof}
Let $\s_c$ be a state space and $(\Omega_1, \ldots, \Omega_c)$ a complete measurement which distinguishes the states $\psi_1, \ldots, \psi_c \in\s_c$. The vectors $\psi_1, \ldots, \psi_c \in \mathbb{R}^c$ are linearly independent; otherwise $\psi_{a} = \sum_{b\neq a} t_b\, \psi_b$ and $1= \Omega_a (\psi_a)$ $=$ $\sum_{b\neq a} t_b\, \Omega_a (\psi_b) =0$ gives a contradiction. Therefore, any state $\psi\in \s_c \subseteq \mathbb{R}^c$ can be written in this basis $\psi =\sum_a q_a\, \psi_a$ where $q_a = \Omega_a (\psi)$ turns out to be the probability of outcome $a$. The numbers $(q_1, \ldots, q_c)$ constitute a probability distribution, hence, there is a one-to-one correspondence between states in $\s_c$ and $c$-outcome probability distributions. This kind of set is called a $d_c$-simplex. A similar argument shows that the effects $\Omega_1, \ldots, \Omega_c$ are linearly independent. Hence, any effect $\Omega$ on $\s_c$ can be written as $\Omega = \sum_a h_a \Omega_a$, and the constraint $0\leq \Omega (\psi_a) \leq 1$ implies $0\leq h_a \leq 1$. In other words, every measurement on $\s_c$ is generated by the complete one.

Every reversible transformation on $\s_c$ is a symmetry of the $d_c$-simplex, that is, a permutation of pure states. Due to Requirement~\ref{a.symmetry}, there is a reversible transformation on the bit $\s_2$ which exchanges the two pure states. Using  Requirement~\ref{a.subspaces} inductively: if there is a transformation on $\s_{c-1}$ which exchanges two pure states and leaves the rest invariant, this transposition can be implemented on $\s_c$, also leaving all other pure states invariant. Therefore, all transpositions can be implemented in $\s_c$, and those generate the full group of permutations.
\end{proof}

\subsection{Reversible transformations for the generalized bit}\label{rev-trans-gbit}

In the rest of the paper, only theories with $d_2>1$ are considered. Theorem~\ref{bs} shows that $\hat\s_2$ is a $d_2$-dimensional unit ball. Equation (\ref{d-c-relation}) for $c=2$ implies that $d_2$ is odd. The pure states in $\hat \s_2$ are the unit vectors in $\mathbb{R}^{d_2}$. A reversible transformation $\hat G \in \hat \g_2$ maps pure states onto pure states, hence it preserves the norm and has to be an orthogonal matrix $\hat G\t = \hat G^{-1}$. Therefore $\hat \g_2$ is a subgroup of the orthogonal group O$(d_2)$.

Requirement \ref{a.symmetry} imposes that for any pair of unit vectors $\hat\varphi, \hat\varphi'$ there is $\hat G \in \hat \g_2$ such that $\hat G(\hat\varphi) = \hat\varphi'$. In other words, $\hat \g_2$ is transitive on the sphere~\cite{Onishchik1, Onishchik2}. According to Lemma~\ref{conncomp}, if $\hat \g_2$ is transitive on the sphere, then the largest connected subgroup $\hat {\cal C}_2 \subseteq \hat \g_2$ is also transitive on the  sphere. The matrix group $\hat {\cal C}_2$ is compact and connected, hence a Lie group (Theorem 7.31 in \cite{matrix_groups}). The classification of all connected compact Lie groups that are transitive on the sphere is done in \cite{Onishchik1, Onishchik2}. For odd $d_2$, the only possibility is $\hat{\cal C}_2 = \mathrm{SO} (d_2)$, except for $d_2 = 7$ where there are additional possibilities: $\hat {\cal C}_2= M \mathbb{G}_{2} M\t \subset {\rm SO}(7)$ for any $M \in \mathrm{O} (7)$, where $\mathbb{G}_{2}$ is the fundamental representation of the smallest exceptional Lie group \cite{group_book}. For even $d_2$, there are many more possibilities~\cite{Onishchik1, Onishchik2}, but equation~(\ref{d-c-relation}) implies that $d_2$ must be odd.

The stabilizer of the vector $\hat\nu_1$ defined in (\ref{thetai}) is the subgroup $\hat {\cal H}_2 = \{\hat G\in \hat\g_2: \hat G (\hat\nu_1) = \hat\nu_1 \}$. Each transformation $\hat H\in \hat {\cal H}_2$ has the form
\[
	\hat H= \left( \begin{array}{cc}
		1 & \z\t \\
		\z & \bar H \\
	\end{array} \right)\ ,
\]
where $\bar H \in \bar {\cal H}_2$ is the nontrivial part. If $\hat {\cal C}_2 = \mathrm{SO} (d_2)$ then $\mathrm{SO} (d_2-1) \subseteq \bar {\cal H}_2$. In the case $d_2=7$, if $\hat {\cal C}_2= M \mathbb{G}_{2} M\t$ then $\bar {\cal H}_2$ contains (up to the similarity $M$) the real 6-dimensional representation of $\mathrm{SU} (3)$ given by
\begin{equation}\label{su3}
	\bar H = \left( \begin{array}{cc}
		\mathrm{re}\, U & \mathrm{im}\, U \\
		-\mathrm{im}\, U & \mathrm{re}\, U \\
	\end{array} \right) \ ,
\end{equation}
where $\mathrm{re}\, U$ and $\mathrm{im}\, U$ are the real and imaginary parts of $U\in \mathrm{SU} (3)$ (see exercise 22.27 in \cite{group_book}). 


\subsection{Two generalized bits}\label{2gbits}

The joint state space of two $\s_2$ systems is denoted by $\s_{2,2}$. The multiplicativity of the capacity (\ref{claim1}) implies that $\s_{2,2}$ is equivalent to $\s_4$. However, we write $\s_{2,2}$ to emphasize the bipartite structure.

In what follows, instead of using the standard representation for bipartite systems (\ref{psiAB}) we generalize the Bloch representation to two generalized bits. A state $\psi_{AB} \in\s_{2,2}$ has Bloch representation $\hat\psi_{AB} = [\alpha, \beta, C]$ with
\begin{equation}\label{new-rep}
\begin{array}{lll}
	\alpha^i &=& 2p(x_i)-1
	\\ 
	\beta^j &=& 2p(y_j)-1
	\\ 
	C^{ij} &=& 4p(x_i, y_j) -2p(x_i) -2p(y_j) +1
\end{array}
\end{equation}
for $i,j=1, \ldots, d_2$. Note that $\alpha = \hat\psi_A$ and $\beta = \hat\psi_B$ are the reduced states in the Bloch representation (\ref{hat psi}). The correlation matrix can also be written as $C^{ij} = p(x_i, y_j) -p(x_i, \bar y_j) -p(\bar x_i, y_j) +p(\bar x_i, \bar y_j)$, and characterizes the correlations between subsystems. Product states have Bloch representation 
\begin{equation}\label{bloch-prod}
	(\varphi_A \otimes \varphi_B)^\wedge = \big[ \hat\varphi_A, \hat\varphi_B, \hat\varphi_A \hat\varphi_B\t \big]\ ,
\end{equation}
with rank-one correlation matrix. In QT, where $d_2=3$, two-qubit density matrices are often represented by $[\alpha, \beta, C]$ through formula (\ref{dm}). Definition (\ref{new-rep}) implies
\begin{equation}\label{bounent}
	-1\leq \alpha^i, \beta^j, C^{ij} \leq 1\ . 
\end{equation}

The invertible map ${\cal L} [\psi_{AB}] = \hat\psi_{AB}$ defined by (\ref{new-rep}) also determines the Bloch representation of effects $\hat\Omega =$ $\Omega \circ {\cal L}^{-1}$. In particular, the tensor-product of two effects of the form (\ref{effect-state}) is
\begin{equation}\label{prod-eff}
	(\Omega_{\varphi_{A}} \!\otimes \Omega_{\varphi_{B}}) ^\wedge [\alpha, \beta, C]  
=
	\big( 1+ \hat\varphi_A\t \alpha +\hat\varphi_B\t \beta + \hat\varphi_A\t C \hat\varphi_B \big)/4\ .
\end{equation}
The map ${\cal L}$ also determines the action of reversible transformation in the Bloch representation. Since ${\cal L}$ is affine but not linear, the action $\hat G= {\cal L}\circ G\circ {\cal L}^{-1}$ need not be linear. Identities (\ref{mix prod}, \ref{bloch-prod}) and $\hat\mu_2 = \z$ imply that the maximally-mixed state in $\hat\s_{2,2}$ is $\hat\mu_{2,2} = (\hat\mu_2, \hat\mu_2, \hat\mu_2 \hat\mu_2\t) = \z$. This and (\ref{inv state}) imply that transformations $\hat\g_{2,2}$ act on the generic vector $[\alpha, \beta, C]$ as matrices. In particular, local transformations $G_A, G_B \in \g_2$ act as
\begin{equation}\label{localt}
	(G_A \otimes G_B)^\wedge [\alpha, \beta, C] =  
[\hat G_A \alpha, \hat G_B \beta, \hat G_A C \hat G_B\t]
\ .
\end{equation}
Subsection \ref{rev-trans-gbit} concludes that $\hat \g_2$ consists of orthogonal matrices, and Lemma~\ref{sogs} shows that all transformations in $\hat\g_{2,2}$ are orthogonal, too. Orthogonal matrices preserve the norm of vectors, therefore all pure states $\psi \in \s_{2,2}$ satisfy
\begin{equation}\label{psnorm}
	|\hat\psi|^2 =
	|\alpha|^2 + |\beta|^2 + \mathrm{tr}(C\t C) = 3\ .
\end{equation}
The constant in the right-hand side can be obtained by letting $\hat\psi = [\alpha, \alpha, \alpha\alpha\t]$ with $|\alpha| = 1$.

\subsection{Consistency in the subspaces of two generalized bits}

In this subsection we use a trick introduced in \cite{Daki}: to impose the equivalence between a particular subspace of $\s_{2,2}$ and $\s_2$ (Requirement~\ref{a.subspaces}).

Consider the unit vector $\hat\nu_1$ from (\ref{thetai}) and the two distinguishable pure states $\hat\varphi_0 = \hat \nu_1$ and $\hat\varphi_1 = -\hat \nu_1$ from $\hat\s_2$. The four pure states $\varphi_{a,b} = \varphi_a \otimes \varphi_b \in \s_{2,2}$ can be distinguished with the complete measurement $\Omega_{a,b} = \Omega_{\varphi_a} \otimes \Omega_{\varphi_b}$ where $a,b \in \{0,1\}$. Formula (\ref{prod-eff}) implies 
\begin{eqnarray}
	\hat\Omega_{0,0} [\alpha, \beta, C] &=& (1+ \alpha^1 +\beta^1 +C^{1,1})/4\ ,
	\label{equ1} \\ \label{equ2}
	\hat\Omega_{1,1} [\alpha, \beta, C] &=& (1 -\alpha^1 -\beta^1 +C^{1,1})/4\ .
\end{eqnarray}
Requirement \ref{a.subspaces} implies that the subspace
\[
	\s_2' = \{\psi \in \s_{2,2} : (\Omega_{0,0} + \Omega_{1,1}) (\psi) =1\}\ ,
\]
is equivalent to $\s_2$. By adding (\ref{equ1}) plus (\ref{equ2}), it becomes clear that a state $\hat\psi=[\alpha, \beta, C]$ belongs to $\hat\s_2'$ if and only if $C^{1,1} =1$. Moreover, if $\psi\in\s'_2$, then it follows from $\hat\Omega_{0,1}(\hat\psi)\geq 0$ and $\hat\Omega_{1,0}(\hat\psi)\geq 0$ that $\alpha^1=\beta^1$.

\begin{theorem}
	The state space of a generalized bit has dimension three ($d_2=3$).
\end{theorem}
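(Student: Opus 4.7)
The plan is to exploit the equivalence $\s_2'\cong\s_2$ forced by Requirement~\ref{a.subspaces}, so that $\s_2'$ is a $d_2$-dimensional ball whose pure states form a $(d_2-1)$-sphere, and to perform a dimension count on this sphere inside the concrete Bloch description of $\s_{2,2}$ developed in Subsection~\ref{2gbits}. Combined with $d_2=2^r-1$ (odd) from~(\ref{d-c-relation}) and the exclusion of $d_2=1$ in the previous subsection, the candidates are $d_2\in\{3,7,15,\ldots\}$, and only $d_2=3$ should survive.

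Concretely, a pure state $\hat\psi=[\alpha,\beta,C]\in\hat\s_2'$ has $C^{1,1}=1$ and $\alpha^1=\beta^1=:r\in[-1,1]$, and by the pure-state norm identity~(\ref{psnorm}),
\[
|\bar\alpha|^2+|\bar\beta|^2+\|c_R\|^2+\|c_C\|^2+\mathrm{tr}(\bar C\t\bar C)=2(1-r^2),
\]
where bars indicate components $2,\ldots,d_2$ and $c_R,c_C,\bar C$ are the off-diagonal row, column and $(d_2-1)\times(d_2-1)$ block of $C$. The two ``poles'' $r=\pm 1$ coincide with the product states $\varphi_0\otimes\varphi_0$ and $\varphi_1\otimes\varphi_1$, while the ``equator'' $r=0$ must be a $(d_2-2)$-sphere of entangled pure states under the equivalence. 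The crucial step is to use positivity of the product effects~(\ref{prod-eff}) applied with $\hat\varphi_A,\hat\varphi_B\in S^{d_2-1}$ near $\pm\hat\nu_1$, together with local stabilizer transformations (\ref{localt}) for $\hat H_A,\hat H_B\in\hat{\cal H}_2$ and the subsystem swap $A\leftrightarrow B$, to force $\bar\alpha=\bar\beta=0$, $c_R=c_C=0$, and $\bar C$ symmetric and traceless on the equator.

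With these constraints in hand, the dimension count is immediate: symmetric traceless $(d_2-1)\times(d_2-1)$ matrices form a space of dimension $\binom{d_2}{2}-1$, and the leftover norm condition $\mathrm{tr}(\bar C^2)=2$ cuts out a sphere of dimension $\binom{d_2}{2}-2$. Equating to the required equator dimension $d_2-2$ gives $\binom{d_2}{2}=d_2$, i.e.\ $d_2=3$. The main obstacle is the rigorous derivation of the ``$\bar C$ symmetric traceless and off-diagonal pieces vanish'' skeleton without assuming quantum theory, especially in the exceptional branch $d_2=7$ where $\hat{\cal C}_2=M\mathbb{G}_2M\t$ and $\bar{\cal H}_2\supset\mathrm{SU}(3)$ acts via~(\ref{su3}); that branch has to be dispatched separately using the representation theory of $\mathrm{SU}(3)$ on the correlation block $\bar C\in\mathbb{R}^{6\times 6}$ before the clean arithmetic identity $\binom{d_2}{2}=d_2$ can be invoked to conclude $d_2=3$.
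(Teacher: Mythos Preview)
Your setup is sound and coincides with the paper's: work in the Bloch picture of $\hat\s_{2,2}$, look at the equator of $\hat\s_2'$ (states with $C^{1,1}=1$, $\alpha^1=\beta^1=0$), and compare its dimension $d_2-1$ with what the correlation block $\bar C$ can carry. The vanishing of the off-diagonal row and column of $C$ ($c_R=c_C=0$ in your notation, $\bar\gamma=\bar\tau=0$ in the paper's) is indeed obtained exactly as you indicate, from transitivity of $\hat\g_2$ and the bound~(\ref{bounent}).

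The gap is in the step where you ``force \ldots\ $\bar C$ symmetric and traceless on the equator.'' This claim is not derivable from the tools you list, and in fact it is false: the paper itself shows, immediately \emph{after} proving $d_2=3$, that there are two consistent branches for the equator, one with $\bar C$ a reflection (symmetric, traceless) and one with $\bar C$ a rotation (neither symmetric nor traceless). The subsystem swap sends $C\mapsto C\t$ and hence maps the \emph{set} of equator states to itself, but it does not force any individual $\bar C$ to be symmetric; and nothing in positivity of product effects or the stabilizer action yields a trace constraint. The further claim $\bar\alpha=\bar\beta=0$ on the equator is likewise not available at this stage --- the paper establishes it only for $d_2=3$, again after the theorem.

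Even granting your constraints, the final ``equating'' step does not close. Knowing that the equator (a $(d_2-1)$-ball) embeds into the symmetric--traceless subspace gives only the inequality $d_2-1\le\binom{d_2}{2}-1$, which holds for every $d_2\ge 3$; you would additionally need that the equator \emph{fills} the norm sphere in that subspace, and you have no mechanism for that. The paper avoids this entirely: instead of pinning down the structure of $\bar C$, it uses that for $d_2>3$ (with $\mathrm{SO}(d_2-1)\subseteq\bar{\cal H}_2$) the exterior product $\bar{\cal H}_2\boxtimes\bar{\cal H}_2$ acts irreducibly on $(\mathbb{C}^{d_2-1})^{\otimes 2}$, so the orbit of \emph{any} nonzero $\bar C$ spans the full $(d_2-1)^2$-dimensional block. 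This is a \emph{lower} bound on the affine span of a subset of the equator, and it immediately contradicts the upper bound $d_2-1$. The $\bar C=\z$ case is handled separately via $|\bar\alpha|=|\bar\beta|=1$, and the exceptional $d_2=7$ branch by bounding the minimal invariant subspace dimension of the $\mathrm{SU}(3)$ action from below. That orbit-dimension argument is the missing idea.
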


\begin{proof}
Recall that the case under consideration is odd $d_2$ larger than one. The space $\s_2' \subset \s_{2,2}$ is equivalent to $\s_2$, which is a $d_2$-dimensional unit ball. If $\varphi_{0,0}$ and $\varphi_{1,1}$ are considered the poles of this ball, then the equator is the set of states $\psi_\mathrm{eq}$ such that $\Omega_{0,0} (\psi_\mathrm{eq}) = \Omega_{1,1} (\psi_\mathrm{eq}) =1/2$. Equations \mbox{(\ref{equ1}, \ref{equ2})} tell that equator states have $\alpha^1 = \beta^1 =0$, and then 
\begin{equation}\label{cheq}
	\hat \psi_\mathrm{eq} = [ 
	\left( \begin{array}{c}
		0 \\ \bar\alpha 
	\end{array} \right),
	\left( \begin{array}{c}
		0 \\ \bar\beta 
	\end{array} \right),
	\left( \begin{array}{cc}
		1 & \bar\tau\t \\
		\bar\gamma & \bar C \\
	\end{array} \right)
]\ ,
\end{equation}
where $\bar\alpha, \bar\beta, \bar\gamma, \bar\tau \in \mathbb{R}^{d_2-1}$ and $\bar C \in \mathbb{R}^{(d_2-1) \times (d_2-1)}$. Consider the action of $G_A\otimes\id$ for $G_A\in\g_2$ on an equator state $\psi_\mathrm{eq}$.
Since $\hat\g_2$ is transitive on the unit sphere, if $\bar\gamma \neq \z$ then there is some $\hat G_A \in \hat\g_2$ such that the correlation matrix transforms into
\[
	\hat G_A \left( \begin{array}{cc}
		1 & \bar\tau\t \\
		\bar\gamma & \bar C \\
	\end{array} \right)
	=\left( \begin{array}{cc}
		\sqrt{1 + |\bar\gamma|^2} & ?\\
		\z &  ? \\
	\end{array} \right)\ ,
\]
which is in contradiction with (\ref{bounent}). Therefore $\bar\gamma = \z$, and by a similar argument $\bar\tau = \z$.

The stabilizer of $\hat\nu_1$ is the largest subgroup $\hat {\cal H}_2 \subset \hat\g_2$ which leaves $\hat\nu_1$ invariant (Subsection \ref{rev-trans-gbit}). For any pair $H_A, H_B \in {\cal H}_2$ the identity $\Omega_{a,b} \circ (H_A \otimes H_B) = \Omega_{a,b}$ holds, which implies that if $\psi_\mathrm{eq}$ belongs to the equator (\ref{cheq}) then
\[
	(H_A \otimes H_B) (\psi_\mathrm{eq})^\wedge
	= [ 
	\left( \begin{array}{c}
		0 \\ \bar H_A \bar\alpha
	\end{array} \right)\! ,
	\left( \begin{array}{c}
		0 \\ \bar H_B \bar\beta 
	\end{array} \right)\! ,
	\left( \begin{array}{cc}
		1 & \z\t  \\
		\z & \bar H_A \bar C \bar H_B\t \\
	\end{array} \right)]
\]
also belongs to the equator. The equator is a unit ball of dimension $d_2-1$. Since the set
\begin{equation}\label{sethh}
	\{ (H_A \otimes H_B) (\psi_\mathrm{eq}) : H_A, H_B \in {\cal H}_2 \}
\end{equation}
is a subset of the equator, the dimension of its affine span is at most $d_2-1$. 

Consider the case $\bar C =\z$. The normalization condition (\ref{psnorm}) implies $|\bar\alpha| =|\bar\beta| =1$. The set \mbox{$\{ \bar H_A \bar\alpha : \bar H_A \in \bar{\cal H}_2 \}$} has dimension $d_2-1$, and the same for \mbox{$\{ \bar H_B \bar\beta : \bar H_B \in \bar{\cal H}_2 \}$}. Therefore the set (\ref{sethh}) has dimension at least $2(d_2-1)$ generating a contradiction.

Consider the case $\bar C \neq \z$. The group action on $\bar C$ corresponds to the exterior tensor product \mbox{$\bar{\cal H}_2 \boxtimes \bar{\cal H}_2$} = \mbox{$\{\bar H_A \otimes \bar H_B:  \bar H_A, \bar H_B \in \bar{\cal H}_2\}$}. If \mbox{$d_2>3$} and \mbox{$\mathrm{SO}(d_2-1)$} $\subseteq \bar{\cal H}_2$ then $\bar{\cal H}_2$ is irreducible in $\mathbb{C} ^{d_2-1}$, and a simple character-based argument shows that \mbox{$\bar{\cal H}_2 \boxtimes \bar{\cal H}_2$} is irreducible in $(\mathbb{C} ^{d_2-1}) ^{\otimes 2}$ (see page~427 in \cite{group_book}). Hence the set
\begin{equation}\label{setC}
	\{ \bar H_A \bar C \bar H_B\t : \bar H_A, \bar H_B \in \bar {\cal H}_2 \} 
\end{equation}
has dimension \mbox{$(d_2 -1)^2$}, which conflicts with the dimensionality requirements of (\ref{sethh}). If $d_2=7$ and $\bar{\cal H}_2$ contains the representation of ${\rm SU}(3)$ given in (\ref{su3}), then the subgroup  
\[
	\bar H = \left( \begin{array}{cc}
		U & \z \\
		\z & U \\
	\end{array} \right)
\]
with $U \in \mathrm{SO} (3) \subset \mathrm{SU} (3)$ has two invariant $\mathbb{C}^{3}$ subspaces. Therefore the invariant subspaces of $\bar{\cal H}_2 \boxtimes \bar{\cal H}_2$ have at least dimension 9, and independently of $\bar C$, the set (\ref{setC}) has at least dimension 9, which conflicts with the dimensionality requirements of (\ref{sethh}). So the only possibility is $d_2=3$.
\end{proof}

From now on, only the case $d_2=3$ is considered. Subsection \ref{rev-trans-gbit} tells that $\mathrm{SO} (3) \subseteq \hat\g_2 \subseteq {\rm O}(3)$, which implies that either $\hat\g_2 = {\rm O}(3)$ and $\bar {\cal H}_2 = {\rm O}(2)$, or $\hat\g_2 = {\rm SO}(3)$ and $\bar {\cal H}_2 = {\rm SO}(2)$. 

Let us see that the first case is impossible. The group $\bar {\cal H}_2 = {\rm O}(2)$ is irreducible in $\mathbb{C}^{2}$, therefore $\bar{\cal H}_2 \boxtimes \bar{\cal H}_2$ is irreducible in $(\mathbb{C}^{2})^{\otimes 2}$. Three paragraphs above it is shown that $\bar C \neq \z$, hence the set (\ref{setC}) has dimension $(d_2-1)^2$, which is a lower bound for the one of (\ref{sethh}), which is larger than the allowed one ($d_2-1 =2$).

Let us address the second case. The group $\bar {\cal H}_2 = {\rm SO}(2)$ is irreducible in $\mathbb{R}^2$ but reducible in $\mathbb{C}^{2}$; so the previous argument does not hold. The vector space of $2\times 2$ real matrices decomposes into the subspace generated by rotations
\begin{equation}\label{rot}
	R_+ = \left( \begin{array}{cc}
		\cos v & \sin v \\
		-\sin v & \cos v \\
	\end{array} \right)\ ,
\end{equation}
and the one generated by reflections
\begin{equation}\label{ref}
	R_- = \left( \begin{array}{cc}
		\cos v & \sin v \\
		\sin v & -\cos v \\
	\end{array} \right)\ ,
\end{equation}
where $\det R_\pm =\pm 1$. For any pair $\bar H_A, \bar H_B \in \bar {\cal H}_2$ the matrix $\bar H_A R_+ \bar H_B\t$ is a rotation and the matrix $\bar H_A R_- \bar H_B\t$ is a reflection; therefore the 2-dimensional subspaces \mbox{(\ref{rot}) and (\ref{ref})} are invariant under $\bar{\cal H}_2 \boxtimes \bar{\cal H}_2$. Since the equator has dimension $d_2-1=2$, all matrices $\bar C \neq\z$ must be fully contained in one of the two subspaces spanned by (\ref{rot}) or (\ref{ref}), otherwise the dimension of the set (\ref{sethh}) would be too large again. For the same reason $\bar\alpha =\bar\beta =\z$. 

Depending on whether $\bar C$ is in the subspace generated by $R_+$ from~(\ref{rot}) or by $R_-$ from~(\ref{ref}), the states in the equator of $\s'_2$ are either $\hat\psi_\mathrm{eq}^+$ or $\hat\psi_\mathrm{eq}^-$, where
\[
	\hat\psi_\mathrm{eq}^\pm =[ 
	\left( \begin{array}{c}
		0 \\ 0 \\ 0 
	\end{array} \right),
	\left( \begin{array}{c}
		0 \\ 0 \\ 0
	\end{array} \right),
	\left( \begin{array}{ccc}
		1 & 0 & 0 \\
		0 & \cos v & \sin v \\
		0 & \mp\sin v & \pm\cos v
	\end{array} \right)
]\ .
\]
The proportionality constants in $\bar C \propto R_\pm$ are fixed by normalization (\ref{psnorm}). It turns out that both the symmetric case $\hat\psi_\mathrm{eq}^-$ and the antisymmetric case $\hat\psi_\mathrm{eq}^+$ correspond to different representations of the same physical theory ---that is, the corresponding state spaces (together with measurements and transformations) are \emph{equivalent} in the sense of Subsection~\ref{equivss}. To see this, define the linear map $\hat\tau: \hat\s_2 \to \hat\s_2$ as $\hat\tau(\alpha_1,\alpha_2,\alpha_3)\t:=(\alpha_1,\alpha_2,-\alpha_3)\t$; that is, a reflection in the Bloch ball. The equivalence transformation is defined as ${\cal L}:=\tau\otimes\id$ (in quantum information terms, this is a ``partial transposition''). This map respects the tensor product structure, leaves the set of product states invariant, and satisfies $\hat{\cal L}(\hat\psi_\mathrm{eq}^+)=\hat\psi_\mathrm{eq}^-$ \cite{Daki}.
In other words: we have reduced the discussion of the antisymmetric theory to that of the symmetric theory~\footnote{As a physical interpretation of the antisymmetric case, consider two observers who have never met before, but who have independently built devices to measure spin-$\frac 1 2$ particles in three orthogonal directions. If they never had the chance to agree on a common ``handedness'' of spatial coordinate systems, and happen to have chosen two different orientations, they will measure antisymmetric correlation matrices on shared quantum states. The ``three-bit nogo result'' from~\cite{Daki} can be interpreted as follows: if there is a third observer, then it is impossible that \emph{every} pair of parties measures antisymmetric correlation matrices.}, which will be considered for the rest of the paper.

The orthogonality of the matrices in $\hat\g_{2,2}$ implies that $\hat\s_2'$ is a 3-dimensional ball, and not just affinely related to it. Hence all states on the surface of the ball $\hat\s_2'  \subset \hat\s_{2,2}$ can be parametrized in polar coordinates $u\in[0,\pi)$ and $v\in[0,2\pi)$ as
\begin{eqnarray}\label{all-eq-st}
	\hat\psi (u,v) = \hspace{7cm} \\ \nonumber
	[\left( \begin{array}{c}
		\cos u \\ 0 \\ 0 
	\end{array} \right)\! ,
	\left( \begin{array}{c}
		\cos u \\ 0 \\ 0
	\end{array} \right)\! ,
	\left( \begin{array}{ccc}
		1 & 0 & 0 \\
		0 & \sin u \cos v & \sin u \sin v \\
		0 & \sin u \sin v & -\sin u \cos v
	\end{array} \right)]\ .
\end{eqnarray}
These states cannot be written as proper mixtures of other states from $\hat\s'_2$. It is easy to see that this implies that they are pure states in $\hat\s_{2,2}$.

\subsection{The Hermitian representation}
\label{SubsecHermitian}

In this subsection, a new (more familiar) representation is introduced, where states in $\s_2$ are represented by $2\times 2$ Hermitian matrices. For any state $\psi \in\s_2$ in the standard representation (\ref{psi state}), define the linear map
\begin{equation}\label{def L}
	{\cal L}[\psi] = \psi^0 \frac{\id- \sigma^1- \sigma^2- \sigma^3}{2} + \sum_{i=1}^{3} \psi^i \sigma^i \ .
\end{equation}
The Pauli matrices
\[
	\sigma^1 = 
	\left(\begin{array}{cc} 
		0 & 1	\\ 
		1 & 0
	\end{array}\right)\ ,\ \ 
	\sigma^2 = 
	\left(\begin{array}{cc} 
		0 & -\mathrm{i}	\\ 
		\mathrm{i} & 0
	\end{array}\right)\ ,\ \ 
	\sigma^3 = 
	\left(\begin{array}{cc} 
		1 & 0 \\ 
		0 & -1
	\end{array}\right)\ , 
\]
together with the identity $\id$ constitute an orthogonal basis for the real vector space of Hermitian matrices. In terms of the Bloch representation, the map (\ref{def L}) has the familiar form 
\[
	{\cal L} [\psi] = \frac 1 2 \left( \id + \sum_{i=1}^{3} \hat\psi^i \sigma^i \right)\ . 
\]
All positive unit-trace $2\times 2$ Hermitian matrices can be written in this way with $\hat\psi$ in the unit sphere. Since $\hat\s_2$ is a 3-dimensional unit sphere, the set ${\cal L} [{\cal S}_2]$ is the set of quantum states. The extreme points of ${\cal L} [{\cal S}_2]$ are the rank-one projectors: each pure state $\psi\in \s_2$ satisfies ${\cal L} [\psi] = |\psi \rangle\! \langle \psi|$, where the vector $|\psi\rangle \in \mathbb{C}^{2}$ is defined up to a global phase. Effect (\ref{effect-state}) associated to the pure state  $\varphi \in \s_2$ is
\begin{equation}\label{localmeas2}
	\Omega_{\varphi} (\psi) = 
\left( \Omega_\varphi \!\circ\! {\cal L}^{-1}\right)\left({\cal L}[\psi]\right) =
\mathrm{tr} \left( |\varphi \rangle\! \langle \varphi|\, {\cal L} [\psi] \right)\ .
\end{equation}
Note that the state $\varphi$ and its associated effect $\Omega_\varphi$ are both represented by $|\varphi \rangle\! \langle \varphi|$. The action of a reversible transformation $\hat G\in \hat\g_2 = \mathrm{SO} (3)$ in the Hermitian representation is 
\[
	{\cal L} [G (\psi)] = U {\cal L}[\psi] U^\dagger \, , 
\]
where $U \in \mathrm{SU} (2)$ is related to $\hat G$ via
\begin{equation}\label{rel U-O}
	\sum_{j=1}^3 \hat G^{ji} \sigma^j = 
	U \sigma^i U^\dagger\ , 
\end{equation}
and $\hat G^{ji}$ are the matrix components (equation VII.5.12 in~\cite{Haar-book}). In summary, the generalized bit in all theories satisfying $d_2>1$ and the requirements, is equivalent to the qubit in QT.

\subsection{Reconstructing quantum theory}

In this subsection, the main result of this work is proved. But before, let us introduce some notation. 

In QT, the state space with capacity $c$ and the corresponding group of reversible transformations are 
\begin{eqnarray*}
	{\cal S}_c^{\rm Q} &=& \{\rho \in \mathbb{C}^{c\times c}:\ \rho \geq 0,\ \mathrm{tr} \rho =1\}\ ,
	\\
	\g_c^{\rm Q} &=& \{U \otimes U^* : U\in \mathrm{SU} (c) \}\ .
\end{eqnarray*}
The joint state space of $m$ generalized bits is denoted by $\s_{2^{\times m}}$, and the corresponding group of reversible transformations by $\g_{2^{\times m}}$. The Hermitian representation of a state $\psi \in \s_{2^{\times m}}$ is defined to be ${\cal L}^{\otimes m} [\psi]$, where ${\cal L}^{\otimes m}:={\cal L}\otimes \cdots\otimes {\cal L}$, and ${\cal L}$ is defined in~(\ref{def L}). The map ${\cal L}^{\otimes m}$ acts independently on each tensor factor, hence it translates the tensor product structure from the standard representation (\ref{localmeas}, \ref{localtrans}, \ref{prod-state}) to the Hermitian one.  For example, if $\varphi\in \s_2$ is a pure state, then ${\cal L}^{\otimes m} [\varphi^{\otimes m}] = |\varphi \rangle\!\langle \varphi |^{\otimes m}$.
The notation 
\begin{eqnarray*}
	\s_{2^{\times m}} ^{\rm H} &=& {\cal L} ^{\otimes m} [\s_{2^{\times m}}]\ , \\
	\g_{2^{\times m}} ^{\rm H} &=& {\cal L} ^{\otimes m} \circ\g_{2^{\times m}} \circ ({\cal L} ^{\otimes m})^{-1}\ ,
\end{eqnarray*}
will be useful. The Hermitian representation of a state $\hat\psi_{AB}= [\alpha, \beta, C] \in \hat\s_{2,2}$ is
\begin{eqnarray}\label{dm}
	{\cal L}^{\otimes 2} [\psi_{AB}] = \hspace{6cm}
	\\ \nonumber
	\frac{1}{4} \Big[\id\! \otimes\! \id
+\sum_{i=1}^3 \alpha^i \sigma^i\! \otimes\! \id
+\sum_{j=1}^3 \beta^j \id\! \otimes\! \sigma^j
+\sum_{i,j=1}^3 C^{ij} \sigma^i\! \otimes\! \sigma^j \Big]\ .
\end{eqnarray}
The action of local transformations $G_A, G_B \in \g_2$ on $\psi_{AB} \in \s_{2,2}$ is
\begin{equation}\label{mat local}
	{\cal L}^{\otimes 2}\! \big[(G_A \otimes G_B) (\psi_{AB}) \big]  
	=
	(U_A \otimes U_B) \rho_{AB} (U_A \otimes U_B) ^\dagger  
\end{equation}
where $\rho_{AB} = {\cal L}^{\otimes 2} [\psi_{AB}]$ and $U_A, U_B \in \mathrm{SU} (2)$ are related to $G_A, G_B$ via (\ref{rel U-O}). Now, we are ready to prove

\begin{theorem}
The only GPT with $d_2>1$ satisfying Requirements 1--5 is quantum theory.
\end{theorem}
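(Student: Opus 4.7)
I would bootstrap from $\s_2 = \s_2^{\rm Q}$ (Subsection~\ref{SubsecHermitian}) up to $\s_{2^{\times m}} = \s_{2^m}^{\rm Q}$ for every $m$, and then invoke Requirement~\ref{a.subspaces} to reach every capacity $c$. The base case $m=2$ is the heart of the argument. Applying the map ${\cal L}^{\otimes 2}$ of (\ref{dm}) to the equator states (\ref{all-eq-st}) yields precisely the rank-one projectors onto the Schmidt vectors $\cos(u/2)|00\rangle + e^{\mathrm{i}v}\sin(u/2)|11\rangle$ (e.g.\ $u=\pi/2$, $v=\pi$ gives $C=\mathrm{diag}(1,-1,1)$, i.e.\ the Bell state $|\Phi^+\rangle\langle\Phi^+|$). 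Because (\ref{mat local}) says local transformations act as conjugation by $\mathrm{SU}(2)\otimes\mathrm{SU}(2)$, and because any pure two-qubit vector admits a Schmidt decomposition, the local-transformation orbit of the equator exhausts all rank-one projectors in $\mathbb{C}^4$. Convexity then gives $\s_4^{\rm Q} \subseteq \s_{2,2}^{\rm H}$.

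For the reverse inclusion I would combine the normalization $|\hat\psi|^2 = 3$ from (\ref{psnorm}) with Requirement~\ref{a.symmetry}. The latter places all pure states of $\s_{2,2}$ in a single $\g_{2,2}$-orbit, and one such pure state (e.g.\ $\varphi_A\otimes\varphi_B$) is already a rank-one projector in $\s_4^{\rm Q}$. By Lemma~\ref{sogs} each $\hat G \in \hat\g_{2,2}$ is orthogonal, fixes the maximally-mixed state $\z$, and must normalize the local $\mathrm{SU}(2)\otimes\mathrm{SU}(2)$-action coming from Subsection~\ref{rev-trans-gbit}; representation-theoretic bookkeeping on $\mathbb{R}^{15}$ then forces $\hat G$ to be conjugation by an element of $\mathrm{SU}(4)$, the antiunitary branch having been eliminated by the earlier reduction of the antisymmetric to the symmetric equator. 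Consequently the pure-state orbit is exactly the $\mathrm{SU}(4)$-orbit of $|00\rangle\langle00|$, i.e.\ all rank-one projectors, giving $\s_{2,2}^{\rm H} = \s_4^{\rm Q}$ and $\g_{2,2}^{\rm H} = \g_4^{\rm Q}$.

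The induction step $m \to m+1$ runs along the same lines, viewing $\s_{2^{\times(m+1)}}$ as bipartite with factors of capacity $2^m$ and $2$. The inductive hypothesis supplies the local group $\mathrm{SU}(2^m)\otimes \mathrm{SU}(2)$ acting on $\s_{2^m}^{\rm Q}\otimes \s_2^{\rm Q}$; the Schmidt decomposition on $\mathbb{C}^{2^m}\otimes \mathbb{C}^2$ plays the role of (\ref{all-eq-st}); and the same transitivity-plus-orthogonality argument closes the step. For arbitrary capacity $c$, pick $m$ with $2^m \ge c$ and a complete measurement $(\Omega_1,\ldots,\Omega_{2^m})$ on $\s_{2^m}^{\rm Q}$; Requirement~\ref{a.subspaces} identifies $\s_c$ with the face $\{\rho \in \s_{2^m}^{\rm Q} : \sum_{a=1}^c \Omega_a(\rho)=1\}$, which in the Hermitian representation is exactly the set of density matrices supported on a $c$-dimensional subspace of $\mathbb{C}^{2^m}$, namely $\s_c^{\rm Q}$. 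Requirement~\ref{a.effects}, applied at $\s_2$ and propagated through the subspace structure, then recovers the full effect space $0 \le M \le \id$.

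The main obstacle is the passage from the local orbit generated by $\mathrm{SU}(2^m)\otimes\mathrm{SU}(2)$ to the full reversible group: showing that $\g_{2^{\times(m+1)}}^{\rm H}$ coincides with $\g_{2^{m+1}}^{\rm Q}$ rather than some larger orthogonal subgroup of $\mathrm{O}(4^{m+1}-1)$. This rests on the detailed structure of the equator together with transitivity on pure states, and on carefully excluding sporadic discrete or antiunitary extensions that could survive on single-system data but become incompatible with a bipartite subsystem once Requirement~\ref{a.effects} is invoked through Requirement~\ref{a.subspaces}.
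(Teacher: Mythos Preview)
Your forward inclusion $\s_4^{\rm Q}\subseteq\s_{2,2}^{\rm H}$ via Schmidt decomposition is the same as the paper's. The gap is in the reverse inclusion. You assert that every $\hat G\in\hat\g_{2,2}$ ``must normalize the local $\mathrm{SU}(2)\otimes\mathrm{SU}(2)$-action'', but nothing in the requirements forces a global reversible transformation to send product transformations to product transformations; in QT itself this fails. Without that normalizer claim, orthogonality together with transitivity on the radius-$\sqrt 3$ sphere does not cut $\mathrm{O}(15)$ down to the adjoint image of $\mathrm{SU}(4)$, so the ``representation-theoretic bookkeeping on $\mathbb{R}^{15}$'' is not an argument. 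Your exclusion of the antiunitary branch is also misplaced: the symmetric/antisymmetric equator discussion was a choice of \emph{representation} for the equator of $\s'_2$, not a restriction on $\g_{2,2}$.

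The paper closes the gap by reversing the order: it first bounds the \emph{states}, not the group. Using the isometry identity~(\ref{L isometry}) and transitivity, it shows that for every rank-one projector $|\psi\rangle\langle\psi|$ the functional $\rho\mapsto\mathrm{tr}(|\psi\rangle\langle\psi|\,\rho)$ is an allowed effect on $\s_{2,2}^{\rm H}$ (obtained by composing a product effect with a suitable $H^{-1}\in\g_{2,2}^{\rm H}$). Positivity of outcome probabilities then forces every $\rho\in\s_{2,2}^{\rm H}$ to be positive semidefinite, hence $\s_{2,2}^{\rm H}=\s_4^{\rm Q}$. Only after the state space is identified is the group determined, by Wigner's theorem (antiunitaries are ruled out via Wigner's normal form, $\hat\g_2=\mathrm{SO}(3)$, and Requirement~\ref{a.subspaces}) and the universal-gate-set fact that local unitaries plus one entangling unitary generate $\mathrm{SU}(4)$. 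For $m$ gbits the paper does not induct on a $2^m\times 2$ bipartition; it uses directly that two-qubit unitaries generate all of $\mathrm{SU}(2^m)$, so the case $m=2$ already yields $\g_{2^m}^{\rm Q}\subseteq\g_{2^{\times m}}^{\rm H}$ without re-running any equator analysis, and then the same effect-positivity argument gives $\s_{2^{\times m}}^{\rm H}=\s_{2^m}^{\rm Q}$.
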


\begin{proof}
We start by reproducing an argument from~\cite{Daki} which shows that ${\cal S}_4^{\rm Q} \subseteq \s_{2,2} ^{\rm H}$. 
A particular family of pure states in $\s_{2,2}$ is $\psi (u) = \psi (u,0)$ defined in (\ref{all-eq-st}). The Hermitian representation of $\psi (u)$ is the projector ${\cal L} ^{\otimes 2} [\psi(u)] = |\psi (u) \rangle\!\langle \psi (u) |$ onto the \mbox{$\mathbb{C}^{2} \otimes \mathbb{C}^{2}$}-vector
\[
	| \psi (u) \rangle = \cos\frac{u}{2} |+\rangle \!\otimes\! |+\rangle+\sin\frac u 2 |-\rangle \!\otimes\! |-\rangle\ ,
\]
where $|+\rangle= (1,1)\t/\sqrt{2}$ and $|-\rangle= (-1,1)\t/\sqrt{2}$. From the Schmidt decomposition, it follows that all rank-one projectors in $\mathbb{C}^{4\times 4}$ can be written as \mbox{$(U_A \otimes U_B) |\psi (u) \rangle\!\langle \psi (u) |(U_A \otimes U_B)^\dagger$} for some value of $u$ and some local unitaries $U_A , U_B \in \mathrm{SU}(2)$. Thus, all rank-one projectors are pure states in $\s_{2,2}^{\rm H}$. Their mixtures generate all of $\s_4 ^{\rm Q}$, therefore $ \s_4 ^{\rm Q} \subseteq \s_{2,2}^{\rm H}$. 

Direct calculation shows that
\begin{equation}\label{L isometry}
   {\rm tr}\!\left( {\cal L}^{\otimes 2}[\psi]\, {\cal L}^{\otimes 2} [\psi']\right)
=\frac 1 4 + \frac 1 4 \left[ \alpha\t \alpha' +\beta\t \beta' +{\rm tr}(C\t C') \right]
\end{equation}
for any pair of states $\hat\psi = [\alpha, \beta, C]$ and $\hat\psi' = [\alpha', \beta', C']$ from $\hat \s_{2,2}$. Lemma~\ref{sogs} shows that all $\hat G\in\hat\g_{2,2}$ are orthogonal matrices. Therefore, the Euclidean inner product between states, as in the right-hand side of (\ref{L isometry}), is preserved by the action of any $\hat G \in \hat\g_{2,2}$. Equality (\ref{L isometry}) maps this property to the Hermitian representation: any  $H \in \g_{2,2} ^{\rm H}$ preserves the Hilbert-Schmidt inner product between states:
\begin{equation}
   {\rm tr}\! \left[H(\rho)\, H(\rho')\right] = {\rm tr}(\rho\, \rho')\ ,
   \label{eqIsometry}
\end{equation}
for all $\rho, \rho' \in \s_{2,2} ^{\rm H}$. 

For any pure state $\varphi \in \s_2$, the rank-one projector $|\varphi \rangle\!\langle \varphi| \otimes |\varphi \rangle\!\langle \varphi| = {\cal L}^{\otimes 2}[\varphi \otimes \varphi]$ is a pure state in $\s_{2,2} ^{\rm H}$, and ${\rm tr} (|\varphi \rangle\!\langle \varphi| \otimes |\varphi \rangle\!\langle \varphi|\, \rho) = (\Omega_\varphi \otimes \Omega_\varphi) \circ ({\cal L}^{\otimes 2})^{-1} (\rho)$ is a measurement on $\s_{2,2} ^{\rm H}$. Any rank-one projector $|\psi \rangle\!\langle \psi| \in \mathbb{C}^{4 \times 4}$ is a pure state in $\s_{2,2} ^{\rm H}$, hence there is $H\in \g_{2,2} ^{\rm H}$ such that \mbox{$H(|\varphi \rangle\!\langle \varphi| \otimes |\varphi \rangle\!\langle \varphi|)$} $= |\psi \rangle\!\langle \psi|$. Composing the transformation $H$ with the effect $\Omega_\varphi \otimes \Omega_\varphi$ generates the effect $(\Omega_\varphi\otimes\Omega_\varphi)\circ({\cal L}^{\otimes 2})^{-1}\circ H^{-1}$, which maps any $\rho\in\s_{2,2}^{\rm H}$ to
\begin{equation}\label{psi_eff}
	{\rm tr}\! \left[ |\varphi \rangle\!\langle \varphi| \otimes |\varphi \rangle\!\langle \varphi|\, H^{-1}(\rho) \right] = 
	{\rm tr}\! \left[ |\psi \rangle\!\langle \psi|\, \rho \right]\ ,
\end{equation}
where (\ref{eqIsometry}) has been used. In summary, every rank-one projector $|\psi \rangle\!\langle \psi| \in \mathbb{C}^{4 \times 4}$ has an associated effect (\ref{psi_eff}) which is an allowed measurement on $\s_{2,2}^{\rm H}$, and these generate all quantum effects.

We have seen that all quantum states $\s_4^{\rm Q}$ are contained in $\s_4 ^{\rm H}$, but can there be other states? If so, the associated Hermitian matrices should have a negative eigenvalue (note that all states in the Hermitian representation (\ref{dm}) have unit trace). If $\rho$ has a negative eigenvalue and $|\psi\rangle$ is the corresponding eigenvector, then the associated measurement outcome (\ref{psi_eff}) has negative probability. Hence, we conclude that $\s_4^{\rm Q} = \s_{4} ^{\rm H}$, and similarly for the measurements.

All reversible transformations $H \in \g_4 ^{\rm H}$ map pure states to pure states, that is, rank-one projectors to rank-one projectors. According to Wigner's Theorem~\cite{Bargmann}, every map of this kind can be written as $H(|\psi\rangle\! \langle\psi|)= (U|\psi\rangle) (U|\psi\rangle)^\dagger$, where $U$ is either unitary or anti-unitary. If $U$ is anti-unitary, it follows from Wigner's normal form~\cite{Wigner} that there is a two-dimensional $U$-invariant subspace spanned by two orthonormal vectors $|\theta_0 \rangle, |\theta_1 \rangle \in \mathbb{C}^4$ such that $U\left( t_0 |\theta_0\rangle + t_1 |\theta_1\rangle \right)$ equals either $\bar t_0 |\theta_0\rangle+ \bar t_1 |\theta_1\rangle$ or $\bar t_1 e^{is} |\theta_0\rangle +\bar t_0 e^{-is} |\theta_1\rangle$ for some $s\in \mathbb{R}$. In both cases, $U$ acts as a reflection in the corresponding Bloch ball, which contradicts Requirement~\ref{a.subspaces} because we know that $\g_2=SO(3)$. Therefore $\g_4 ^{\rm H} \subseteq \g_4^{\rm Q}$.

We know that $\g_{2,2} ^{\rm H}$ contains all local unitaries. Since this group is transitive on the pure states, it contains at least one unitary which maps a product state to an entangled state. It is well-known~\cite{Bremner} that this implies that the corresponding group of unitaries constitutes a universal gate set for quantum computation; that is, it generates every unitary operation on $2$ qubits. This proves that $\g_4 ^{\rm H}= \g_4^{\rm Q}$.

Consider $m$ generalized bits as a composite system. From the previously discussed case of $\s_{2,2}$, we know that every unitary operation on every pair of generalized bits is an allowed transformation on $\s_{2^{\times m}} ^{\rm H}$. But two-qubit unitaries generate all unitary transformations~\cite{Bremner}, hence $\g_{2^m}^{\rm Q} \subseteq \g_{2^{\times m}} ^{\rm H}$. By applying all these unitaries to $|\varphi\rangle\! \langle\varphi|^{\otimes m}$, all pure quantum states are generated, hence $\s_{2^m}^{\rm Q} \subseteq \s_{2^{\times m}} ^{\rm H}$. Reasoning as in the $\s_{2,2}$ case, for every rank-one projector $|\psi\rangle\! \langle\psi|$ acting on $\mathbb{C}^{2^m}$, the associated effect which maps $\rho\in\s_{2^{\times m}}^{\rm H}$ to ${\rm tr}(|\psi\rangle\! \langle\psi| \rho)$ is an allowed measurement outcome on $\s_{2^{\times m}}^{\rm H}$. This implies that all matrices in $\s_{2^{\times m}} ^{\rm H}$ have positive eigenvalues, therefore $\s_{2^{\times m}} ^{\rm H} = \s_{2^m} ^{\rm Q}$ and $\g_{2^{\times m}} ^{\rm H} = \g_{2^ m} ^{\rm Q}$.

The remaining cases of capacities $c$ that are not powers of two are treated by applying Requirement \ref{a.subspaces}, using that $\s_c\subset\s_{2^m}$ for large enough  $m$.
\end{proof}

\section{Conclusion}\label{s.conclusions}

We have imposed five physical requirements on the framework of generalized probabilistic theories. These requirements are simple and have a clear physical meaning in terms of basic operational procedures. It is shown that the only theories compatible with them are CPT and QT. If Requirement \ref{a.symmetry} is strengthened by imposing the continuity of reversible transformations, then the only theory that survives is QT. Any other theory violates at least one of the requirements, hence the relaxation of each one constitutes a different way to go beyond QT.

The standard formulation of QT includes two postulates which do not follow from our requirements: (i) the update rule for the state after a measurement, and (ii) the Schr\"odinger equation. If desired, these can be incorporated in our derivation of QT by imposing the following two extra requirements: (i) {\em if a system is measured twice ``in rapid succession" with the same measurement, the same outcome is obtained both times} \cite{Hardy5}, and (ii) {\em closed systems evolve reversibly and continuously in time}. 

This derivation of QT contains two steps which deserve a special mention. First, a direct consequence of Requirement \ref{a.subspaces} is that $\s_2$ is fully surrounded by pure states, which together with Requirement \ref{a.symmetry} implies that $\s_2$ is a ball. Second, this ball has dimension three, since $d=3$ is the only value for which SO($d-1$) is reducible in $\mathbb{C}^d$.

Modifications and generalizations of QT are of interest in themselves, and could be essential in order to construct a QT of gravity. Some well-known attempts~\cite{Gleason, Weinberg} have shown that straightforward modifications of QT's mathematical  formalism quickly lead to inconsistencies, such as superluminal  signaling~\cite{Gisin}. This work provides an alternative way to proceed. We have shown that the Hilbert space formalism of QT follows from five simple physical requirements. This gives five different consistent ways to go beyond QT, each obtained by relaxing one of our requirements.


\acknowledgments

The authors are grateful to Anne Beyreuther, Jens Eisert, Volkher Scholz, Tony Short, and Christopher Witte for discussions. Special thanks to Lucien Hardy for pointing out the multiplicity of groups that are transitive on the sphere and, correspondingly, the need to address the 7-dimensional ball as a special case. Llu\'{\i}s Masanes is financially supported by Caixa Manresa, and benefits from the Spanish MEC project TOQATA (FIS2008-00784) and QOIT (Consolider Ingenio 2010), EU Integrated Project SCALA and STREP project NAMEQUAM.
Markus M\"uller was supported by the EU (QESSENCE).
Research at Perimeter Institute is supported by the Government of Canada through Industry Canada and by the Province of Ontario through the Ministry of Research and Innovation.

\appendix

\section{Lemmas}

\begin{lemma}\label{LemMaxMixUnique}
In any state space $\s_c$, the only state $\psi\in\s_c$ which is invariant under all reversible transformations 
\begin{equation}\label{l1e1}
	G(\psi)= \psi\ \mbox{ for all }\ G\in \g_c\ , 
\end{equation}
is the maximally-mixed state $\mu_c$, defined in (\ref{maxmix}).
\end{lemma}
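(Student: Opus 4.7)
The plan is to use the invariance of the Haar measure on $\g_c$ together with Requirement~\ref{a.symmetry}, which guarantees that the reversible transformations act transitively on pure states. The key tool is to ``average'' any invariant state against the Haar measure and compare the result with the definition (\ref{maxmix}) of $\mu_c$.

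First I would observe that for \emph{any} pure state $\varphi\in\s_c$ the Haar integral $\int_{\g_c} G(\varphi)\,dG$ equals $\mu_c$, not just for the distinguished pure state used in (\ref{maxmix}). This is where Requirement~\ref{a.symmetry} enters: given any other pure $\varphi'$, there exists $H\in\g_c$ with $\varphi'=H(\varphi)$, and then the substitution $G'=GH$ combined with the translation invariance of Haar measure gives $\int G(\varphi')\,dG=\int G'(\varphi)\,dG'=\mu_c$. By linearity and convex decomposition into pure states (which exists because $\s_c$ is compact and convex), the identity $\int_{\g_c} G(\chi)\,dG=\mu_c$ in fact extends to every state $\chi\in\s_c$, not only pure ones.

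Now suppose $\psi\in\s_c$ satisfies $G(\psi)=\psi$ for all $G\in\g_c$. Integrating both sides with respect to the normalized Haar measure, the left-hand side becomes $\int_{\g_c} G(\psi)\,dG=\mu_c$ by the remark above, while the right-hand side is $\psi\int_{\g_c} dG=\psi$. Hence $\psi=\mu_c$, which is the claim.

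The only genuinely non-routine step is the $\varphi$-independence of $\int G(\varphi)\,dG$, which is the content for which Requirement~\ref{a.symmetry} is really needed; once that is in place the proof is a one-line averaging argument. Everything else (existence and normalization of the Haar measure, compactness of $\g_c$, convexity of $\s_c$) has already been established in Section~\ref{gpt} and at the start of Section~\ref{sqt}.
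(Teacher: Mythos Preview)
Your proposal is correct and follows essentially the same averaging argument as the paper: decompose the invariant state into pure states, use that $\int_{\g_c} G(\varphi)\,dG=\mu_c$ for every pure $\varphi$, and conclude by linearity. The only cosmetic difference is that the paper establishes the $\varphi$-independence of the Haar average in the text preceding the lemma (immediately after~(\ref{maxmix})), so its proof simply cites ``the definition of $\mu_c$'' at that step, whereas you spell out the substitution $G'=GH$ explicitly inside the proof.
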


\begin{proof}
Suppose $\psi\in\s_c$ satisfies (\ref{l1e1}). Any state can be written as a mixture of pure states: $\psi=\sum_k q_k\, \psi_k$. Normalization $\int_{\mathcal{G}_c} \!\! dG= 1$, condition (\ref{l1e1}), the linearity of $G$, the purity of all $\psi_k$, the definition of $\mu_c$, and $\sum_k q_k=1$, imply
\begin{eqnarray*}
   \psi&=& \int_{\mathcal{G}_c} \!\!\psi\, dG  =  \int_{\mathcal{G}_c} \!\!G(\psi)\, dG
	\\ &=& \sum_k q_k  \int_{\mathcal{G}_c} \!\! G(\psi_k)\, dG
	= \sum_k q_k\, \mu_c = \mu_c\ ,
\end{eqnarray*}
which proves the claim.
\end{proof}

\begin{lemma}\label{l.SGS-1}
	If $\g$ is a compact real matrix group, then there is a real matrix $S>0$ such that for each $G\in \g$ the matrix $SGS^{-1}$ is orthogonal.
\end{lemma}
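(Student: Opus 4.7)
My plan is to use the classical averaging trick: since $\mathcal{G}$ is compact, it carries a normalized invariant Haar measure $dG$, so I can average the standard Euclidean inner product over the group to produce a new inner product that is invariant under $\mathcal{G}$, and then diagonalize.

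Concretely, first I would define the bilinear form
\[
   \langle x, y\rangle_{\mathcal{G}} := \int_{\mathcal{G}} (Gx)\t(Gy)\, dG
\]
on $\mathbb{R}^n$. Since the integrand is continuous in $G$ and $\mathcal{G}$ is compact, the integral is well-defined. This bilinear form is symmetric and, because the integrand $|Gx|^2$ is nonnegative and strictly positive at $G=\id$ (and hence on a neighborhood of $\id$) whenever $x\neq 0$, it is positive definite. Thus there exists a symmetric positive definite matrix $M$ with $\langle x,y\rangle_{\mathcal{G}} = x\t M y$. By left-invariance of the Haar measure, for any $H\in\mathcal{G}$,
\[
   \langle Hx, Hy\rangle_{\mathcal{G}} = \int_{\mathcal{G}} (GHx)\t(GHy)\, dG = \int_{\mathcal{G}} (Gx)\t(Gy)\, dG = \langle x,y\rangle_{\mathcal{G}},
\]
which in matrix terms reads $H\t M H = M$ for every $H\in\mathcal{G}$.

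Next I would take the positive square root. Since $M>0$, the spectral theorem gives a unique symmetric $S>0$ with $S^2 = M$. Substituting $M = S^2$ in $H\t M H = M$ and conjugating by $S^{-1}$ (which exists and is also symmetric positive definite), I get
\[
   S^{-1} H\t S^2 H S^{-1} = \id, \qquad\text{i.e.,}\qquad (S H S^{-1})\t (S H S^{-1}) = \id,
\]
using $(SHS^{-1})\t = S^{-1}H\t S$ because $S$ is symmetric. Hence $SHS^{-1}\in\mathrm{O}(n)$ for every $H\in\mathcal{G}$, which is the claim.

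There is no serious obstacle; the only point that needs care is the existence of a normalized invariant Haar measure on a compact matrix group, which is a standard fact (cited elsewhere in the paper as \cite{Haar-book}) and guarantees both that the averaging is well-defined and that the resulting form is genuinely $\mathcal{G}$-invariant. The positivity of $M$, and hence the existence of the square root $S>0$, is the essential ingredient that makes $S$ invertible and lets the computation go through.
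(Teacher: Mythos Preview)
Your proof is correct and is essentially the same as the paper's: the paper defines $P=\int_{\g} G\t G\,dG$, which is exactly your matrix $M$, takes $S=\sqrt{P}$, and checks orthogonality of $SGS^{-1}$ in the same way. One minor quibble: the substitution $G\mapsto GH$ you use is \emph{right} translation, so you are invoking right-invariance of the Haar measure rather than left-invariance; since $\g$ is compact the Haar measure is bi-invariant and the argument goes through unchanged.
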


\begin{proof}
Since the group $\g$ is compact, there is an invariant Haar measure \cite{Haar-book}, which allows us to define
\[
	P= \int_\g \!\! G\t G\, dG\ .
\]
Since each $G$ is invertible, the matrix $G\t G$ is strictly positive, and $P$ too. Define $S=\sqrt P >0$ where both $S,S^{-1}$ are real and symmetric. For any $G\in \g$ we have $(SGS^{-1})\t (SGS^{-1}) =\id$, which implies orthogonality.
\end{proof}

\begin{lemma}\label{LemMaxMix}
If $\mu_A$ and $\mu_B$ are the maximally-mixed states of the state spaces $\s_A$ and  $\s_B$, then the maximally-mixed state of the composite system $\s_{AB}$ is 
\[
	\mu_{AB} = \mu_A \otimes \mu_B\ .
\]
\end{lemma}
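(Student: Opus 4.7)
The approach will be to verify the product form via Requirement~\ref{a.correlations} (local tomography): since $\mu_{AB}$ is completely determined by the joint probabilities $(\Omega_A\otimes\Omega_B)(\mu_{AB})$ over all local effects, I only need to show
\[
(\Omega_A\otimes\Omega_B)(\mu_{AB})=\Omega_A(\mu_A)\,\Omega_B(\mu_B),
\]
which matches the corresponding expression for $\mu_A\otimes\mu_B$. The technical tool will be averaging over the local subgroup $\g_A\times\g_B\subseteq\g_{AB}$, which bypasses having to know the full group $\g_{AB}$.

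First I would use that $G_A\otimes\id\in\g_{AB}$ for every $G_A\in\g_A$, so $(G_A\otimes\id)(\mu_{AB})=\mu_{AB}$. Pairing both sides with $\Omega_A\otimes\Omega_B$ and transferring $G_A$ onto the effect gives $(\Omega_A\otimes\Omega_B)(\mu_{AB})=((\Omega_A\circ G_A)\otimes\Omega_B)(\mu_{AB})$. Integrating this identity over the normalised Haar measure of $\g_A$ replaces $\Omega_A$ by its group average $\bar\Omega_A:=\int_{\g_A}\Omega_A\circ G\,dG$.

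Next I would argue that $\bar\Omega_A$ is the constant affine functional on $\s_A$ with value $\Omega_A(\mu_A)$. By the invariance of the Haar measure, $\bar\Omega_A\circ G=\bar\Omega_A$ for every $G\in\g_A$; Requirement~\ref{a.symmetry} (transitivity on pure states) then forces $\bar\Omega_A$ to be constant on pure states; and since every state is a convex combination of pure states and $\bar\Omega_A$ is affine, the constancy extends to all of $\s_A$. Evaluating at $\mu_A$ shows the constant equals $\Omega_A(\mu_A)$, i.e.\ $\bar\Omega_A\equiv \Omega_A(\mu_A)\,\mathbf{1}$ as an effect on $\s_A$.

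Plugging this back yields $(\Omega_A\otimes\Omega_B)(\mu_{AB})=\Omega_A(\mu_A)\,(\mathbf{1}\otimes\Omega_B)(\mu_{AB})=\Omega_A(\mu_A)\,\Omega_B(\mu_{AB}|_B)$, and a symmetric averaging argument with $\id\otimes G_B$ (or, equivalently, a direct application of Lemma~\ref{LemMaxMixUnique} to the $\g_B$-invariant reduced state $\mu_{AB}|_B$) gives $\mu_{AB}|_B=\mu_B$. The desired factorisation follows, and Requirement~\ref{a.correlations} promotes it to $\mu_{AB}=\mu_A\otimes\mu_B$. The only delicate step is the passage from $\g_A$-invariance of the averaged effect to its constancy on all of $\s_A$, which hinges crucially on Requirement~\ref{a.symmetry}; the remaining manipulations are essentially bookkeeping with Haar integrals.
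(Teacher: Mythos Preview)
Your proof is correct. Both your argument and the paper's rest on the same two ingredients---invariance of $\mu_{AB}$ under the local subgroup $\g_A\times\g_B$ and Haar averaging---but they are organized dually. The paper works on the \emph{state} side: it expands $\mu_{AB}$ in a spanning set of pure product states $\psi_a^A\otimes\psi_b^B$ (this spanning property is exactly local tomography, used implicitly at the outset), then averages the state over $\g_A\times\g_B$ and uses the definition~(\ref{maxmix}) to collapse each tensor factor to $\mu_A$ or $\mu_B$. You instead work on the \emph{effect} side: you average $\Omega_A$ over $\g_A$, invoke Requirement~\ref{a.symmetry} to see the result is the constant functional $\Omega_A(\mu_A)\mathbf{1}$, and finish by an explicit appeal to local tomography. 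Your route makes the role of Requirement~\ref{a.correlations} more visible and avoids writing down an explicit (non-convex) product-state expansion; the paper's route is slightly shorter and keeps everything phrased in terms of states. Neither approach needs any information about the global group $\g_{AB}$ beyond containment of the local subgroup, which is the essential point.
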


\begin{proof}
The pure states $\psi^A$ in $\s_A$ linearly span $\mathbb{R}^{d_A+1}$, and the pure states $\psi^B$ in $\s_B$ linearly span $\mathbb{R}^{d_B+1}$. Therefore, pure product states $\psi^A \otimes \psi^B$ span $\mathbb{R}^{d_A+1} \otimes \mathbb{R}^{d_B+1}$. In particular, the maximally-mixed state (\ref{maxmix}) of $\s_{AB}$ can be written as
\begin{equation}\label{la1}
	\mu_{AB} 
	= \sum_{a,b} t_{a,b}\, \psi_a^A \otimes \psi^B_b\ , 
\end{equation}
where $t_{a,b}\in\mathbb{R}$ are not necessarily positive coefficients,
and all $\psi^A_a, \psi^B_b$ are pure. From definition (\ref{psi state}), the first component of the vector equality (\ref{la1}) implies $\sum_{a,b} t_{a,b}=1$. The maximally-mixed state is invariant under all reversible transformations, in particular the local ones
\begin{eqnarray*}
   \mu_{AB} &=& \int_{\mathcal{G}_A} \hspace{-3mm} dG_A \int_{\mathcal{G}_B} \hspace{-3mm} dG_B\, (G_A \otimes G_B) (\mu_{AB}) \\
   &=& \sum_{a,b} t_{a,b} \left[\int_{\mathcal{G}_A} \hspace{-3mm} dG_A\, G_A (\psi_a^A)\right] 
	\!\otimes\!
	\left[ \int_{\mathcal{G}_B} \hspace{-3mm}dG_B\, G_B (\psi^B_b)\right]\\
   &=&\sum_{a,b} t_{a,b}\, \mu_A\otimes \mu_B = \mu_A\otimes \mu_B\ ,
\end{eqnarray*}
where the same tricks from Lemma~\ref{LemMaxMixUnique} have been used.
\end{proof}

\begin{lemma}\label{l.pure}
	For every tight effect $\Omega$, there is a pure state $\psi$ such that $\Omega(\psi) =1$. Also, if a measurement $\Omega_1,\ldots,\Omega_n$ distinguishes $n$ states $\psi_1,\ldots,\psi_n$, then these states can be chosen pure.
\end{lemma}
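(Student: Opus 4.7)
The plan is to use standard convex-analytic facts about faces of compact convex sets, namely that the set where an affine functional attains its maximum is an exposed face whose extreme points are also extreme points of the ambient set.

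First I would address the first claim. Given a tight effect $\Omega$, define
\[
  F := \{\psi\in \s : \Omega(\psi)=1\}.
\]
By tightness $F$ is nonempty, and since $\Omega$ is affine and continuous while $\s$ is compact and convex, $F$ is a compact convex subset of $\s$. By the Krein--Milman theorem, $F$ has at least one extreme point $\psi^\star$. I claim $\psi^\star$ is extreme in $\s$ itself, i.e., pure. Indeed, suppose $\psi^\star = q\psi_1 + (1-q)\psi_2$ with $\psi_1,\psi_2\in\s$ and $0<q<1$. Then $1=\Omega(\psi^\star)=q\,\Omega(\psi_1)+(1-q)\Omega(\psi_2)$; since $\Omega(\psi_i)\in[0,1]$, the only way this convex combination equals $1$ is with $\Omega(\psi_1)=\Omega(\psi_2)=1$, so $\psi_1,\psi_2\in F$. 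Extremality of $\psi^\star$ in $F$ then forces $\psi_1=\psi_2=\psi^\star$, so $\psi^\star$ is pure in $\s$.

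For the second claim, let $\Omega_1,\ldots,\Omega_n$ be a measurement distinguishing $\psi_1,\ldots,\psi_n$, so $\Omega_a(\psi_b)=\delta_{a,b}$. Each $\Omega_a$ is then a tight effect (it attains $0$ on $\psi_b$ for any $b\neq a$, and $1$ on $\psi_a$). By the first part, for each $a$ there exists a pure state $\psi_a'\in\s$ with $\Omega_a(\psi_a')=1$. The normalization $\sum_{b=1}^n \Omega_b(\psi_a')=1$ together with nonnegativity $\Omega_b(\psi_a')\ge 0$ immediately gives $\Omega_b(\psi_a')=0$ for $b\neq a$, i.e.\ $\Omega_b(\psi_a')=\delta_{a,b}$. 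Hence the pure states $\psi_1',\ldots,\psi_n'$ are distinguished by the same measurement $\Omega_1,\ldots,\Omega_n$.

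I do not anticipate any real obstacle here; the only subtle point is the standard fact that extreme points of a face are extreme points of the whole convex set, which I would justify inline by the short direct argument above rather than citing it. Compactness (hence existence of extreme points via Krein--Milman) is already established in Subsection~\ref{stmeas1}, and affinity of effects in Subsection~\ref{stmeas2}, so no new machinery is needed.
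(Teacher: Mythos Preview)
Your proof is correct and follows essentially the same approach as the paper's: the key observation in both is that if an effect bounded above by $1$ achieves the value $1$ on a convex combination, it must achieve $1$ on every component, combined with the existence of extreme points (you invoke Krein--Milman on the face $F$, while the paper decomposes a maximizing state directly into pure states of $\s$). For the second claim you use the normalization $\sum_b \Omega_b = 1$ to deduce the zero values, whereas the paper re-decomposes each $\psi_b$ into pure states and argues the components inherit both the $0$ and $1$ values---a minor presentational variation, not a different idea.
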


\begin{proof}
By definition, for each tight effect $\Omega$ there is a (not necessarily pure) state $\psi'$ such that $\Omega(\psi') =1$. Every $\psi'$ can be written as a mixture of pure states $\psi_k$, that is $\psi' = \sum_k q_k\, \psi_k$ with $q_k >0$ and $\sum_k q_k =1$. Effects are linear functions such that $\Omega(\psi) \leq 1$ for any state $\psi$. Therefore, it must happen that all pure states $\psi_k$ in the above decomposition satisfy $\Omega(\psi_k) =1$. 

To prove the second part, let $\psi'_1, \ldots, \psi'_n$ be the states that are distinguished by the measurement, that is $\Omega_a(\psi'_b) = \delta_{a,b}$. Every $\psi'_b$ can be written as a convex combination of pure states $\psi'_b=\sum_k q_k\, \psi_{b,k}$. But effects are linear functions such that $0\leq \Omega(\psi) \leq 1$ for any state $\psi$. Hence $\Omega(\psi'_b)=0$ is only possible if $\Omega(\psi_{b,k})=0$ for all $k$, and similarly for the case $\Omega(\psi'_b)=1$. It follows that $\Omega_a (\psi_{b,1}) = \delta_{a,b}$.
\end{proof}

\begin{lemma}
\label{LemMixture}
If $\mathcal{S}_c$ is a state space with capacity $c\geq 1$ and $\mu_c$ the corresponding maximally-mixed state, then there are $c$ pure distinguishable states $\psi_1,\ldots, \psi_c \in \s_c$ such that
\[
   \mu_c=\frac 1 c \sum_{a=1}^c \psi_a\ .
\]
\end{lemma}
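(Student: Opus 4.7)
The plan is to prove the lemma by induction on the capacity $c$, combining Requirement~\ref{a.subspaces} (to descend to smaller state spaces) with Lemma~\ref{LemMaxMixUnique} (uniqueness of the invariant state) to pin down the candidate decomposition.

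The base case $c=1$ is immediate, since $\s_1$ contains a single state that is both pure and maximally-mixed. For the inductive step, assume the lemma for capacity $c-1$ and pick, by Lemma~\ref{l.pure}, pure distinguishable states $\psi_1,\ldots,\psi_c\in\s_c$ together with a complete measurement $\Omega_1,\ldots,\Omega_c$ satisfying $\Omega_a(\psi_b)=\delta_{a,b}$. Requirement~\ref{a.subspaces} identifies the face
\[
\s'_{c-1}:=\{\psi\in\s_c:\Omega_c(\psi)=0\}
\]
with $\s_{c-1}$, translating reversible transformations and the Haar-averaged invariant state. The inductive hypothesis inside $\s'_{c-1}$ then gives $\mu'_{c-1}=\tfrac{1}{c-1}\sum_{b=1}^{c-1}\psi_b$. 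Setting
\[
\sigma:=\tfrac{c-1}{c}\,\mu'_{c-1}+\tfrac{1}{c}\,\psi_c=\tfrac{1}{c}\sum_{a=1}^{c}\psi_a,
\]
the goal reduces to showing $\sigma=\mu_c$.

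By Lemma~\ref{LemMaxMixUnique}, this amounts to showing $\sigma$ is invariant under every $G\in\g_c$. For each pure $\psi\in\s_c$ I would define analogously $\sigma_\psi:=\tfrac{c-1}{c}\mu^\psi_{c-1}+\tfrac{1}{c}\psi$, where $\mu^\psi_{c-1}$ is the invariant state of the face of $\s_c$ complementary to $\psi$. The assignment $\psi\mapsto\sigma_\psi$ is $\g_c$-equivariant, since any $G\in\g_c$ carries the complementary face of $\psi$ onto that of $G(\psi)$. Haar-averaging over pure states, both summands integrate separately to $\mu_c$ (by Requirement~\ref{a.symmetry} combined with Lemma~\ref{LemMaxMixUnique}), yielding $\int\sigma_\psi\,d\nu(\psi)=\mu_c$. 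Consequently, once one knows that $\psi\mapsto\sigma_\psi$ is constant, the constant value must be $\mu_c$ and in particular $\sigma=\sigma_{\psi_c}=\mu_c$, closing the induction.

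The main obstacle is therefore to establish that $\psi\mapsto\sigma_\psi$ is constant. My plan is first to note that $\sigma_\psi$ is invariant under the stabilizer $\mathrm{Stab}(\psi)\subset\g_c$: any such $G$ fixes $\psi$ and, by Requirement~\ref{a.subspaces}, its induced action on the complementary face preserves the corresponding invariant state $\mu^\psi_{c-1}$. The delicate technical step is to leverage this stabilizer invariance, together with the $\g_c$-equivariance of $\psi\mapsto\sigma_\psi$ and transitivity from Requirement~\ref{a.symmetry}, to force $\sigma_\psi=\sigma_{\psi'}$ for any two pure states $\psi,\psi'$. Carrying out this bootstrap is where the real work lies; it requires a careful comparison of stabilizer subgroups across different faces, and it is the step where the uniqueness content of the generalized form~(\ref{ss}) of Requirement~\ref{a.subspaces} (namely, that $\psi_a$ is the only state with $\Omega_a(\psi)=1$) enters decisively.
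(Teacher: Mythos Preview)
Your upward induction stalls precisely at the step you flag as ``the real work'': showing that $\psi\mapsto\sigma_\psi$ is constant. Equivariance plus transitivity only tells you that all the $\sigma_\psi$ lie in a single $\g_c$-orbit, and $\mathrm{Stab}(\psi)$-invariance only says $\sigma_\psi$ lies in the fixed-point set of the stabilizer. To conclude constancy you would need to know that this fixed-point set is the line through $\psi$ and $\mu_c$---but at this stage of the paper nothing is known about $\g_c$ or $\mathrm{Stab}(\psi)$ for $c>2$, so there is no leverage to prove this. There is also a well-definedness problem: for a generic pure $\psi$ you have not established that ``the complementary face'' is unique (several tight effects could take the value $1$ at $\psi$), so $\mu_{c-1}^\psi$ and hence $\sigma_\psi$ may not be canonically defined. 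Finally, your use of the induction hypothesis is loose: it furnishes \emph{some} $(c-1)$ pure distinguishable states summing to $\mu'_{c-1}$, not the particular $\psi_1,\ldots,\psi_{c-1}$ you already picked, and you would still owe an argument that the replacement states together with $\psi_c$ are jointly distinguishable in $\s_c$.

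The paper avoids all of this by running the induction \emph{downward} rather than upward. It first obtains the decomposition for capacities $2^n$ essentially for free, using that $\hat\s_2$ is a ball (so $\mu_2=\tfrac12(\varphi_1+\varphi_2)$ for antipodal pure states) together with Lemma~\ref{LemMaxMix} ($\mu_{AB}=\mu_A\otimes\mu_B$), which gives $\mu_{(n)}=(\mu_2)^{\otimes n}$ as a uniform mixture of $2^n$ pure product states; a short counting argument then shows $c_n=2^n$. The inductive step goes from $c$ to $c-1$: given $\mu_c=\tfrac1c\sum_{k=1}^c\psi_k$, one sets $\mu'_{c-1}:=\tfrac1{c-1}\sum_{k<c}\psi_k\in\s'_{c-1}$ and shows it is invariant under every $G\in\g_{c-1}$. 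The point is that any such $G$ is implemented by some $G'\in\g_c$; since $\s'_1=\{\psi:\Omega_c(\psi)=1\}$ is a singleton, a one-line computation forces $G'(\psi_c)=\psi_c$, and together with $G'(\mu_c)=\mu_c$ this yields $G'(\mu'_{c-1})=\mu'_{c-1}$. Lemma~\ref{LemMaxMixUnique} then identifies $\mu'_{c-1}$ as the maximally-mixed state of $\s'_{c-1}\cong\s_{c-1}$. The asymmetry between the two directions is structural: going down, invariance under the \emph{smaller} group $\g_{c-1}$ is inherited from known invariance under $\g_c$; going up, you would need invariance under the \emph{larger} group $\g_c$, and nothing in the hypothesis supplies it.
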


\proof
Since $\mathcal{S}_1$ contains a single state the claim is trivially true for $c=1$. Since $\mathcal{S}_2$ is the $d_2$-dimensional unit ball, two antipodal points $\hat \varphi_1$ and $\hat \varphi_2 =-\hat \varphi_1$ are pure, distinguishable and satisfy 
\begin{equation}\label{B4}
	\mu_2=\frac 1 2(\varphi_1+ \varphi_2)\ .
\end{equation}
Now, consider the joint state space of $n$ generalized bits, denoted $\mathcal{S}_{2^{\times n}}$. Lemma~\ref{LemMaxMix} and (\ref{B4}) imply that the maximally-mixed state of $\mathcal{S}_{2^{\times n}}$ is
\begin{equation}\label{sum rep}
   \mu_{(n)}= (\mu_2)^{\otimes n}= \frac 1 {2^n} \sum_{a_i \in \{1,2\}} \varphi_{a_1}\otimes \cdots \otimes \varphi_{a_n}\ .
\end{equation}
The states $\varphi_{a_1} \otimes  \cdots \otimes \varphi_{a_n} \in \mathcal{S}_{2^{\times n}}$ for all  $a_1, \ldots, a_n \in\{1,2\}$ are perfectly distinguishable by the corresponding product measurement, hence the capacity of $\mathcal{S}_{2^{\times n}}$, denoted $c_n$, satisfies 
\begin{equation}\label{B7}
	c_n \geq 2^n\ . 
\end{equation}
Let $(\Omega_1,\ldots, \Omega_{c_n})$ be a complete measurement which distinguishes the states $\psi_1, \ldots, \psi_{c_n} \in \s_{2^{\times n}}$. According to Lemma~\ref{l.pure} these states can be chosen to be pure. Since $\sum_{k=1}^{c_n} \Omega_k (\mu_{(n)}) =1$, there is at least one value of $k$, denoted $k_0$, such that 
\begin{equation}\label{B8}
	\Omega_{k_0} (\mu_{(n)})\leq 1/c_n\ . 
\end{equation}
The state $\varphi_1 \in \s_2$ from (\ref{B4}) is pure, hence $\varphi_1^{\otimes n} \in \s_{2^{\times n}}$ is pure too. Requirement \ref{a.symmetry} tells that there is a reversible transformation $G$ acting on $\s_{2^{\times n}}$ such that $G(\psi_{k_0}) = \varphi_1 ^{\otimes n}$. The measurement $(\Omega_1 \circ G^{-1}, \ldots, \Omega_{c_n} \circ G^{-1})$ distinguishes the states $G(\psi_1), \ldots, G(\psi_{c_n})$. Inequality (\ref{B8}), the invariance of $\mu_{(n)}$ under $G$, expansion (\ref{sum rep}), the positivity of probabilities, and $(\Omega_{k_0} \circ G^{-1}) (\varphi_1^{\otimes n}) =1$, imply
\begin{eqnarray*}
   \frac{1}{c_n} &\geq& (\Omega_{k_0} \circ G^{-1}) (\mu_{(n)}) 
	\\ &=& \frac{1}{2^n}\!\! \sum_{a_i \in \{1,2\}} (\Omega_{k_0} \circ G^{-1}) (\varphi_{a_1} \otimes \cdots \otimes \varphi_{a_n}) \geq \frac 1 {2^n}\ .
\end{eqnarray*}
This and (\ref{B7}) imply $c_n = 2^n$. This together with (\ref{sum rep}) shows the assertion of the lemma for state spaces whose capacity is a power of two. The rest of cases are shown by induction.

Let us prove that if the claim of the lemma holds for a state space with capacity $c$, with $c>1$, then it holds for a state space with capacity $c-1$ too. The induction hypothesis tells that there is a complete measurement $(\Omega_1, \ldots, \Omega_c)$ which distinguishes the pure states $\psi_1, \ldots, \psi_c \in \mathcal{S}_c$, and $\mu_c = \frac 1 c \sum_{k=1}^c \psi_k \in \s_c$ is the corresponding maximally-mixed state. Requirement \ref{a.subspaces} tells that the state space $\s_{c-1}$ is equivalent to
\[
	\s_{c-1}' =\{\psi\in \s_c\, |\, \Omega_1 (\psi) +\cdots +\Omega_{c-1} (\psi) =1 \}\ .
\]
According to Requirement \ref{a.subspaces}, for each $G\in \g_{c-1}$ there is $G'\in \g_c$ which implements $G$ on ${\cal S}_{c-1}'$. Hence $G'(\psi_k) \in  {\cal S}_{c-1}'$ for $k=1,\ldots, c-1$, which implies $(\Omega_c\circ G')(\psi_k)=0$ for those $k$, and
\begin{eqnarray*}
   \frac 1 c &=& \Omega_c(\mu_c) = (\Omega_c \circ G') (\mu_c)
	\\ &=& \frac 1 c \sum_{k=1}^c\ (\Omega_c \circ G')  (\psi_k)
	=\frac 1 c (\Omega_c \circ G') (\psi_c)\ ,
\end{eqnarray*}
and $(\Omega_c\circ G') (\psi_c)=1$. Requirement \ref{a.subspaces} tells that the set
\[
    {\cal S}_1'= \{\psi \in {\cal S}_c \,|\, \Omega_c(\psi)=1\}
\]
is equivalent to $\mathcal{S}_1$, which contains a single state. This and $\Omega_c(\psi_c)=1$ imply that $G'(\psi_c) =\psi_c$ and then $G'(\mu_{c-1}') =\mu_{c-1}'$, where we define
\[
   \mu_{c-1}' = \frac{1}{c-1} \sum_{k=1}^{c-1} \psi_k = \frac{c}{c-1} \mu_c - \frac{1}{c-1} \psi_c \ \in {\cal S}_{c-1}'\ .
\]
For any $G\in \g_{c-1}$ the corresponding $G'$ satisfies $G'(\mu_{c-1}')= \mu_{c-1}'$. Due to Lemma~\ref{LemMaxMixUnique} the invariant state $\mu_{c-1}'$ must be the maximally-mixed state in $\s_{c-1}'$, which has the claimed form, and  Requirement \ref{a.subspaces} extends this to $\s_{c-1}$. \qed

\begin{lemma}\label{conncomp}
   Let $\s$ be a state space such that the subset of pure states ${\cal P}$ is a connected topological manifold. If the corresponding group of transformations $\g$ is transitive on ${\cal P}$, then the largest connected subgroup ${\cal C \subseteq G}$ is transitive on ${\cal P}$, too.
\end{lemma}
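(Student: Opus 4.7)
The plan is to exploit that ${\cal C}$, being the largest connected subgroup of $\g$, is the \emph{identity component} of $\g$, and therefore is both normal in $\g$ and of finite index (using that $\g$ is a compact matrix Lie group, as established in Section II.C, so its identity component is open and its complement is a finite union of cosets).

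First I would observe that normality of ${\cal C}$ in $\g$ implies that $\g$ permutes the ${\cal C}$-orbits on ${\cal P}$: for any $x\in{\cal P}$ and any $G\in\g$, we have
\[
   G\bigl({\cal C}\cdot x\bigr) \;=\; (G{\cal C} G^{-1})\cdot(Gx) \;=\; {\cal C}\cdot(Gx).
\]
Writing a coset decomposition $\g=\bigsqcup_{i=1}^n G_i{\cal C}$ with $n=[\g:{\cal C}]<\infty$, and fixing any $x_0\in{\cal P}$, transitivity of $\g$ on ${\cal P}$ gives
\[
   {\cal P} \;=\; \g\cdot x_0 \;=\; \bigcup_{i=1}^n G_i{\cal C}\cdot x_0 \;=\; \bigcup_{i=1}^n {\cal C}\cdot(G_i x_0),
\]
so ${\cal P}$ is covered by \emph{finitely many} ${\cal C}$-orbits.

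Next I would argue that each ${\cal C}$-orbit is closed in ${\cal P}$: it is the continuous image of the compact group ${\cal C}$ under $G\mapsto G(y)$, hence compact, and ${\cal P}$ is Hausdorff (being a topological manifold), so compact subsets are closed. The distinct ${\cal C}$-orbits form a partition of ${\cal P}$ into finitely many closed sets; but a finite partition by closed sets forces each piece to also be open (as the complement of a finite union of closed sets). Since ${\cal P}$ is connected and non-empty, only one piece can be non-empty, which means there is a single ${\cal C}$-orbit, i.e.\ ${\cal C}$ is transitive on ${\cal P}$.

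The only slightly delicate step is the finiteness of $[\g:{\cal C}]$, which I would justify by invoking that $\g$ is a compact linear group and hence a Lie group (Section II.C, Cartan's closed-subgroup theorem), so its identity component is open and the quotient $\g/{\cal C}$ is both discrete and compact, therefore finite. Everything else is a soft topological argument using normality of the identity component, continuity of the group action, compactness of ${\cal C}$, and connectedness of ${\cal P}$; I expect no calculation, and the main conceptual hurdle is just recognising that the ``largest connected subgroup'' is precisely the identity component and using its normality to reduce the problem to the clopen-partition argument above.
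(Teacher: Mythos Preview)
Your proof is correct and follows essentially the same route as the paper's: both identify $\mathcal{C}$ as the identity component (normal, of finite index by compactness), use that a $\mathcal{C}$-orbit is compact hence closed in the Hausdorff manifold $\mathcal{P}$, and then invoke connectedness of $\mathcal{P}$ to conclude there is only one orbit. The paper phrases the openness step slightly differently---it shows that the preimage $\mathcal{D}=f^{-1}(f(\mathcal{C}))$ is a union of connected components of $\mathcal{G}$, so that $f(\mathcal{G}\setminus\mathcal{D})$ is closed and hence $f(\mathcal{C})$ is open---whereas you obtain openness directly from the finite-partition-into-closed-sets argument; but the content is the same.
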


\begin{proof}
This proof involves basic notions of point set topology.

Since $\g$ is compact, it is the union of a finite number of (disjoint) connected components $\g= {\cal C}_1 \cup \cdots \cup {\cal C}_n$. If $n=1$ the lemma is trivial. Let ${\cal C}$ be the connected component ${\cal C}_i$ containing the identity matrix $\id$, which is the largest connected subgroup of $\g$. Each connected component ${\cal C}_i$ is clopen (open and closed), compact and a coset of the group: ${\cal C}_i = G_i\circ {\cal C}$ for some $G_i \in \g$ \cite{group_book}.

Pick $\psi\in {\cal P}$, and consider the continuous surjective map $f: \g \rightarrow {\cal P}$, defined by $f(G)= G(\psi) \in{\cal P}$. Since ${\cal C}$ is compact $f({\cal C}) \subseteq {\cal P}$ is compact too. Since the manifold ${\cal P}$ is in particular a Hausdorff space, $f({\cal C})$ is closed. Consider the set ${\cal D} =f^{-1} (f({\cal C}))$. If two group elements $G,H\in\g$ are in the same component, that is $G^{-1}H\in{\cal C}$, then $G\in{\cal D}$ implies $H\in{\cal D}$, using that ${\cal C}$ is a normal subgroup of $\g$. This implies that ${\cal D}$ is the union of some connected components ${\cal C}_i$, and so is $\g\backslash {\cal D}$. In particular, $\g\backslash {\cal D}$ is compact, thus $f(\g\backslash {\cal D})$ is compact, hence closed. Therefore, $f({\cal C})={\cal P}\backslash f(\g\backslash {\cal D})$ is open. We have thus proven that $f({\cal C})\neq \emptyset$ is clopen.
Since ${\cal P}$ is connected, it follows that $f({\cal C})={\cal P}$.
\end{proof}

The following lemma shows that there are transformations for two generalized bits which perform the ``classical" swap and the ``classical" controlled-not in a particular basis. Note that these transformations do not necessarily swap other states that are not in the given basis, as in QT. However, they implement a minimal amount of reversible computational power which exceeds, for example, that of boxworld, where no controlled-not operation is possible~\cite{boxworld}.

\begin{lemma}\label{PseudoSwap}
For each pair of distinguishable states $\varphi_0, \varphi_1\in \s_2$, there are transformations
$G_\mathrm{swap}, G_\mathrm{cnot} \in \g_{2,2}$ such that
\begin{eqnarray}\label{pseudoSwap}
	G_\mathrm{swap} (\varphi_a \otimes \varphi_b) &=& \varphi_b \otimes \varphi_a\ ,
	\\ \label{pseudoCNOT}
	G_\mathrm{cnot} (\varphi_a \otimes \varphi_b) &=& \varphi_a \otimes \varphi_{a \oplus b}\ ,
\end{eqnarray}
for all $a,b \in \{0,1\}$, where $\oplus$ is addition modulo 2.
\end{lemma}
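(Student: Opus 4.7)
The plan is to derive this lemma as an immediate corollary of the main reconstruction theorem. That theorem has established $\s_{2,2}^{\rm H}=\s_4^{\rm Q}$ and $\g_{2,2}^{\rm H}=\g_4^{\rm Q}$, so every unitary conjugation on $\mathbb{C}^2\otimes\mathbb{C}^2$ pulls back through ${\cal L}^{\otimes 2}$ to an element of $\g_{2,2}$. The work therefore reduces to exhibiting two unitaries on $\mathbb{C}^4$ that implement the classical swap and controlled-not on the basis determined by $\varphi_0$ and $\varphi_1$. (The remaining case $d_2=1$ is immediate: in CPT, $\g_{2,2}$ is the full symmetric group on the four distinguishable pure states $\varphi_a\otimes\varphi_b$, and both permutations trivially lie in it.)

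First I would translate GPT-distinguishability of $\varphi_0,\varphi_1\in\s_2$ into orthogonality in $\mathbb{C}^2$: formula~(\ref{localmeas2}) forces their Hermitian images $|\varphi_0\rangle\!\langle\varphi_0|$ and $|\varphi_1\rangle\!\langle\varphi_1|$ to be orthogonal rank-one projectors, so $|\varphi_0\rangle$ and $|\varphi_1\rangle$ (each defined up to a global phase) form an orthonormal basis of $\mathbb{C}^2$, and the four vectors $|\varphi_a\rangle\otimes|\varphi_b\rangle$ form an orthonormal basis of $\mathbb{C}^2\otimes\mathbb{C}^2$. I would then define $U_\mathrm{swap}$ and $U_\mathrm{cnot}$ as the unique linear operators on $\mathbb{C}^4$ fixed by $U_\mathrm{swap}(|\varphi_a\rangle\otimes|\varphi_b\rangle)=|\varphi_b\rangle\otimes|\varphi_a\rangle$ and $U_\mathrm{cnot}(|\varphi_a\rangle\otimes|\varphi_b\rangle)=|\varphi_a\rangle\otimes|\varphi_{a\oplus b}\rangle$; each permutes this orthonormal basis and is therefore unitary.

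The conjugation maps $\rho\mapsto U_\mathrm{swap}\rho U_\mathrm{swap}^\dagger$ and $\rho\mapsto U_\mathrm{cnot}\rho U_\mathrm{cnot}^\dagger$ then lie in $\g_{2,2}^{\rm H}=\g_4^{\rm Q}$, and pulling them back via ${\cal L}^{\otimes 2}$ gives $G_\mathrm{swap},G_\mathrm{cnot}\in\g_{2,2}$ whose action on the pure product states ${\cal L}^{\otimes 2}[\varphi_a\otimes\varphi_b]=|\varphi_a\rangle\!\langle\varphi_a|\otimes|\varphi_b\rangle\!\langle\varphi_b|$ is manifestly~(\ref{pseudoSwap}) and~(\ref{pseudoCNOT}). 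There is no genuine obstacle left once the main theorem is in hand: the lemma is really a concrete reformulation of the fact that the reconstructed theory contains the full set of two-qubit unitary gates, the only point demanding a moment of care being the distinguishability-to-orthogonality correspondence that lets one specify the target unitaries by their action on an orthonormal basis.
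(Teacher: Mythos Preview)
Your argument is circular. In the paper's logical structure, Lemma~\ref{PseudoSwap} is invoked in the proof of Lemma~\ref{sogs} (orthogonality of $\hat\g_{2,2}$), which in turn is used in Subsection~\ref{2gbits} to obtain the norm condition~(\ref{psnorm}), which is then essential input to Theorem~5 establishing $d_2=3$. Only after $d_2=3$ is fixed can the Hermitian representation of Subsection~\ref{SubsecHermitian} be set up, and only then is the main reconstruction theorem proved. So the identities $\s_{2,2}^{\rm H}=\s_4^{\rm Q}$ and $\g_{2,2}^{\rm H}=\g_4^{\rm Q}$ that you want to quote sit strictly downstream of the lemma you are trying to establish; you cannot derive the lemma as a corollary of them.

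The paper's own proof avoids this by working at a stage where nothing is yet known about $d_2$ beyond the ball structure of $\hat\s_2$. It uses only Requirement~\ref{a.subspaces} and Lemma~\ref{LemMixture}: first obtain a transposition of two of the four product states inside the subspace $\s'_2\subset\s'_3$ (via Requirement~\ref{a.symmetry} on the embedded bit), then lift it to $\s'_3$ and to $\s_{2,2}$ by Requirement~\ref{a.subspaces}, and at each lift use invariance of the maximally-mixed state $\mu'_3$ resp.\ $\mu_{2,2}$ to pin down the action on the remaining basis state. This argument is what makes the subsequent orthogonality lemma and the determination of $d_2$ possible; your shortcut presupposes exactly those results.
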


\begin{proof}
Let $(\Omega_0, \Omega_1)$ be the measurement which distinguishes $(\varphi_0, \varphi_1)$, that is $\Omega_a (\varphi_b)= \delta_{a,b}$. Define $\psi_{a,b} = \varphi_a \otimes \varphi_b$ and $\Omega_{a,b} = \Omega_a \otimes \Omega_b$ for $a,b \in \{0,1\}$. Define
\begin{eqnarray*}
   \s'_3 &=& \{\psi\in\s_{2,2}\,\,|\,\, (\Omega_{0,1} +\Omega_{1,0} +\Omega_{1,1}) (\psi)=1\},\\
   \s'_2 &=& \{\psi\in\s_{2,2}\,\,|\,\, (\Omega_{0,1}+\Omega_{1,0})(\psi)=1\}\ ,
\end{eqnarray*}
and note that $\s'_2 \subset \s'_3 \subset \s_{2,2}$. At the end of Subsection~\ref{gener-bit}, it is shown that two distinguishable states $\varphi_0, \varphi_1\in\s_2$ have Bloch representation satisfying $\hat\varphi_0=-\hat\varphi_1$. According to Requirement~\ref{a.symmetry} there is $G\in\g_2$ such that $G(\varphi_0) =\varphi_1$, and by linearity, $\hat G(\hat\varphi_1)=-\hat G(\hat\varphi_0)=-\hat\varphi_1=\hat\varphi_0$.
Requirement \ref{a.subspaces} implies that there is a transformation $G'_\mathrm{swap}$ for $\s_3'$ such that $G'_\mathrm{swap} (\psi_{0,1}) = \psi_{1,0}$ and $G'_\mathrm{swap} (\psi_{1,0}) = \psi_{0,1}$. According to Lemma~\ref{LemMixture}, the maximally-mixed state in $\s'_3$ can be written as $\mu_3' = (\psi_{0,1} +\psi_{1,0} +\psi_{1,1})/3$. Equalities $G'_\mathrm{swap} (\mu_3') = \mu_3'$ and $G'_\mathrm{swap} (\psi_{0,1}+ \psi_{1,0}) = \psi_{0,1}+ \psi_{1,0}$ imply that $G'_\mathrm{swap} (\psi_{1,1}) = \psi_{1,1}$. Using Requirement \ref{a.subspaces} again, there is a reversible transformation $G_\mathrm{swap} \in \g_{2,2}$ which implements $G'_\mathrm{swap}$ in the subspace $\s'_3$. Repeating the argument with the maximally-mixed state (now in $\s_{2,2}$) we conclude that  $G_\mathrm{swap} (\psi_{1,1}) = \psi_{1,1}$, hence $G_\mathrm{swap}$ satisfies (\ref{pseudoSwap}).

The existence of $G_\mathrm{cnot}$ is shown similarly, by exchanging the roles of $\psi_{0,1}$ and $\psi_{1,1}$.
\end{proof}

\begin{lemma}\label{sogs}
Reversible transformations for two generalized bits in the Bloch representation (\ref{new-rep}) are orthogonal:
\[
	\hat \g_{2,2} \subseteq \mathrm{O} (d_4)\ .
\] 
\end{lemma}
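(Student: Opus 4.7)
My plan is to use Lemma~\ref{l.SGS-1} to reduce the problem to the identification of a single invariant positive matrix, which I then pin down using a Schur-lemma argument on the local transformations together with the two specific elements of $\hat\g_{2,2}$ supplied by Lemma~\ref{PseudoSwap}.

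Applied to the compact group $\hat\g_{2,2}$, Lemma~\ref{l.SGS-1} yields a real $S>0$ such that $S\hat G S^{-1}$ is orthogonal for every $\hat G \in \hat\g_{2,2}$. Setting $M := S^2$, this is equivalent to the invariance condition $\hat G\t M \hat G = M$ for all $\hat G \in \hat\g_{2,2}$. If I can show $M$ is a scalar multiple of the identity, then so is $S$, and the similarity $\hat G = S^{-1}(S\hat G S^{-1})S$ will exhibit each $\hat G$ as orthogonal, proving the claim.

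To identify $M$, I would first restrict to local transformations. A direct check from (\ref{localt}) and the orthogonality of $\hat\g_2$ (Subsection~\ref{rev-trans-gbit}) shows that every $(\hat G_A \otimes \hat G_B)^\wedge$ is already orthogonal on $\mathbb{R}^{d_4} = \mathbb{R}^{d_2} \oplus \mathbb{R}^{d_2} \oplus \mathbb{R}^{d_2 \times d_2}$, whence $M$ must commute with the representation of $\hat\g_2 \times \hat\g_2$ on this decomposition. The three direct summands carry, respectively, the actions $(\hat G_A, \hat G_B) \mapsto \hat G_A$, $(\hat G_A, \hat G_B) \mapsto \hat G_B$, and $C \mapsto \hat G_A C \hat G_B\t$. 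These are three pairwise inequivalent irreducible real representations of real type (for each of the transitive subgroups of $\mathrm{O}(d_2)$ identified in Subsection~\ref{rev-trans-gbit}, the standard representation is itself real-type irreducible, and so is the external tensor product), so the real version of Schur's lemma forces
\[
	M = a\, I_\alpha \oplus b\, I_\beta \oplus c\, I_C
\]
for strictly positive scalars $a, b, c$.

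To pin down $a, b, c$, I would invoke Lemma~\ref{PseudoSwap} with the distinguishable pair $\hat\varphi_0 = \hat\nu_1,\ \hat\varphi_1 = -\hat\nu_1$. Taking three linear combinations of the four product-state Bloch vectors yields
\[
	f_1 = [\hat\nu_1, \hat\nu_1, \z], \quad f_2 = [\hat\nu_1, -\hat\nu_1, \z], \quad f_3 = [\z, \z, \hat\nu_1 \hat\nu_1\t].
\]
Linearity of $\hat G_\mathrm{swap}$ together with (\ref{pseudoSwap}) gives $\hat G_\mathrm{swap}(f_1) = f_1$, $\hat G_\mathrm{swap}(f_2) = -f_2$, $\hat G_\mathrm{swap}(f_3) = f_3$. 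Since an $M$-preserving map has $M$-orthogonal $\pm 1$ eigenspaces, $f_1\t M f_2 = a - b$ must vanish, giving $a = b$. An analogous computation using $\hat G_\mathrm{cnot}$ and (\ref{pseudoCNOT}) yields $\hat G_\mathrm{cnot}(f_1) = [\hat\nu_1, \z, \hat\nu_1 \hat\nu_1\t]$, whose squared $M$-norm is $a + c$; equating this with $f_1\t M f_1 = a + b$ gives $b = c$. Combining, $a = b = c$, so $M$ is scalar and the lemma follows.

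The step I expect to be the main obstacle is verifying the real-type irreducibility of the external tensor product cleanly in the exceptional $\hat{\cal C}_2 = M\mathbb{G}_2 M\t \subset \mathrm{SO}(7)$ branch allowed after Subsection~\ref{rev-trans-gbit}; everything else reduces to short explicit calculations involving the four basis product-state Bloch vectors.
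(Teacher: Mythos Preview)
Your proposal is correct and follows essentially the same route as the paper: apply Lemma~\ref{l.SGS-1}, use Schur's lemma on the block-diagonal local action~(\ref{localt}) to reduce $S$ (equivalently your $M=S^2$) to three scalar blocks, then equate the scalars via the swap and cnot transformations of Lemma~\ref{PseudoSwap}. The only cosmetic differences are that the paper works directly with $S$ rather than $M$, and pins down the constants by the norm identities $\hat G_\mathrm{swap}[\alpha,0,0]=[0,\alpha,0]$ and $\hat G_\mathrm{cnot}[0,0,\alpha\alpha\t]=[0,\alpha,0]$ instead of your eigenvector-orthogonality computation; your explicit mention of pairwise inequivalence of the three summands is a point the paper leaves implicit.
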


\begin{proof}
In Subsection \ref{2gbits} the Bloch representation for two generalized bits is defined, and it is argued that reversible transformations $\hat \g_{2,2}$ act on $[\alpha ,\beta ,C]$ as matrices. In particular, local transformations (\ref{localt}) are 
\begin{equation}\label{3-action}
	(G_A \otimes G_B)^\wedge =
	\left(\begin{array}{ccc}
		\hat G_A &\z &\z \\
		\z & \hat G_B &\z \\	
		\z &\z & \hat G_A \otimes \hat G_B
	\end{array}\right) ,
\end{equation}
where each diagonal block acts on an entry of $[\alpha ,\beta ,C]$, and $\hat G_A, \hat G_B \in \hat \g_2$. In Subsection \ref{rev-trans-gbit} it is argued that $\hat \g_2 \subseteq {\rm O}(d_2)$, hence local transformations (\ref{3-action}) are orthogonal.

Lemma~\ref{l.SGS-1} shows the existence of a real matrix $S>0$ such that for any $\hat G\in \hat \g_{2,2}$ the matrix $S\hat G S^{-1}$ is orthogonal. In particular 
\[
	\big[S\cdot (G_A \otimes G_B)^\wedge\cdot S^{-1} \big]\t \big[S\cdot (G_A \otimes G_B)^\wedge \cdot S^{-1} \big] =\id\ ,
\]
which implies the commutation relation
\begin{equation}\label{LS=SL}
	S\cdot (G_A \otimes G_B)^\wedge = (G_A \otimes G_B)^\wedge \cdot S
\end{equation}
Subsection \ref{rev-trans-gbit} concludes that $d_2$ is odd, and that $\mathrm{SO} (d_2) \subseteq \hat \g_2$ except when $d_2=7$, where $M\mathbb{G}_{2} M^{-1} \subseteq \hat \g_2$ and $\mathbb{G}_{2}$ is the fundamental representation of the smallest exceptional Lie group \cite{group_book}. For $d_2 \geq 3$ these groups act irreducibly in $\mathbb{C}^{d_2}$, hence $\hat \g_2$ acts irreducibly in $\mathbb{C}^{d_2}$ too \cite{group_book}. The first two diagonal blocks in (\ref{3-action}) are irreducible. The exterior tensor product of two irreducible representations (in $\mathbb{C}^{d}$) is also an irreducible representation, hence the third diagonal block in (\ref{3-action}) is also irreducible. This together with (\ref{LS=SL}) implies that
\[
	S=\left(\begin{array}{ccc}
		a\id &\z &\z \\
		\z & b\id &\z \\	
		\z &\z & s\id
	\end{array}\right)
\]
for some $a,b,s > 0$ (Schur's Lemma~\cite{group_book}).

According to Lemma~\ref{PseudoSwap}, for each unit vector $\alpha \in \hat \s_2$ there is a transformation $G_\mathrm{swap} \in \g_{2,2}$ such that
\begin{eqnarray*}
	&& \hat G_\mathrm{swap} [\alpha, 0, 0] 
	\\ &=& 
	\hat G_\mathrm{swap} \left( [\alpha, \alpha, \alpha\alpha\t] +[\alpha, -\alpha, -\alpha\alpha\t] \right)/2 
	\\ &=& 
	\left( [\alpha, \alpha, \alpha\alpha\t] + [-\alpha, \alpha, -\alpha\alpha\t] \right)/2
	\\ &=& 
	[0, \alpha, 0]\ .
\end{eqnarray*}
Since $S\hat G_\mathrm{swap} S^{-1}$ is orthogonal, the vectors $[\alpha,0,0]$ and $[0,b a^{-1} \alpha,0]$ have the same modulus, hence $a=b$.
Also, there is a transformation $G_\mathrm{cnot} \in \g_{2,2}$ such that
\begin{eqnarray*}
	&&\hat G_\mathrm{cnot} [0,0,\alpha\alpha\t] 
	\\ &=& 
	\hat G_\mathrm{cnot} \left( [\alpha, \alpha, \alpha\alpha\t] +[-\alpha, -\alpha, \alpha\alpha\t] \right)/2 
	\\ &=& 
	\left( [\alpha, \alpha, \alpha\alpha\t] + [-\alpha, \alpha, -\alpha\alpha\t] \right)/2
	\\ &=& 
	[0, \alpha, 0]\ .
\end{eqnarray*}
Since $S\hat G_\mathrm{cnot} S^{-1}$ is orthogonal, the vectors $[0,0, \alpha\alpha\t]$ and $[0,b s^{-1} \alpha,0]$ have the same modulus, hence $s=b$. Consequently $S=a\id$, and the claim follows.
\end{proof}

\begin{lemma}
\label{LemFive}
The results of this paper also hold if Requirement~\ref{a.effects} is replaced by Requirement 5'.
\end{lemma}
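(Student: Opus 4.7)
The strategy is to locate every place where Requirement~\ref{a.effects} is invoked in the main text and replace each such step by a short deduction from Requirement~5'. A careful reading shows that Requirement~\ref{a.effects} enters the proofs in only three substantive ways: (a)~to conclude that $\s_1$ contains a single state; (b)~inside the contradiction step of the proof of Theorem~\ref{pure=boundary}; and (c)~to certify that the tight effects $\hat\Omega_{\hat\nu_i}$ chosen as fiducial measurements in the proof of Theorem~\ref{bs} are allowed. After these three points, every effect the paper ever uses is either an antipodal binary effect $\Omega_\varphi(\psi)=(1+\hat\varphi\t\hat\psi)/2$ on $\s_2$ for some pure $\varphi$, or a tensor product of such, so once (a)--(c) are handled the remainder of the paper is unaffected.

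For (a), I would argue directly from capacity: $\s_1$ is compact, convex and has capacity one, so no pair of distinguishable states exists in it. If $\s_1$ had more than one element, its relative interior would be a proper subset and some boundary state would exist; that state is not completely mixed, so Requirement~5' would produce a distinguishing partner inside $\s_1$, contradicting capacity one. Hence $\s_1$ is a singleton. For (b), suppose $\hat\varphi_\mathrm{mix}\in\partial\hat\s_2$ with $\hat\varphi_\mathrm{mix}=q\hat\varphi_1+(1-q)\hat\varphi_2$, $\hat\varphi_1\neq\hat\varphi_2$, $0<q<1$. A boundary point of a convex set is not in the relative interior, so Requirement~5' supplies a two-outcome measurement $(\Omega_0,\Omega_1)$ with $\Omega_0(\hat\varphi_\mathrm{mix})=1$; affineness and $\Omega_0\leq 1$ on $\hat\s_2$ force $\Omega_0(\hat\varphi_1)=\Omega_0(\hat\varphi_2)=1$. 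Since this measurement is complete on $\hat\s_2$ (capacity two), Requirement~\ref{a.subspaces} identifies the subspace $\{\psi:\Omega_0(\psi)=1\}$ with $\s_1$, which by~(a) has a single element---contradicting $\hat\varphi_1\neq\hat\varphi_2$. This reproduces Theorem~\ref{pure=boundary}, and the unit-ball conclusion of Theorem~\ref{bs} then follows from Lemma~\ref{l.SGS-1} and transitivity of $\hat\g_2$ exactly as in the original proof.

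For (c), given any unit vector $\hat\nu\in\mathbb{R}^{d_2}$, the point $\hat\nu$ on the unit sphere is a pure state of $\hat\s_2$ and hence not completely mixed, so Requirement~5' yields a two-outcome measurement whose tight effect $\hat\Omega$ satisfies $\hat\Omega(\hat\nu)=1$. An affine functional on the unit ball taking values in $[0,1]$ and attaining both extremes must be of the form $\hat\Omega_{\hat\nu'}(\hat\psi)=(1+\hat\nu'\t\hat\psi)/2$ for some unit $\hat\nu'$, and $\hat\Omega(\hat\nu)=1$ forces $\hat\nu'=\hat\nu$. Thus every $\hat\Omega_{\hat\nu}$ is an allowed measurement, and the fiducial effects used in the proof of Theorem~\ref{bs} are all accessible. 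From this point on, the remainder of the paper---the derivation of $d_2=3$, the two-generalized-bit analysis, the Hermitian representation, and the extension to $\s_{2^{\times m}}$ and arbitrary $\s_c$ via Requirement~\ref{a.subspaces}---only manipulates antipodal effects and their tensor products, so all subsequent arguments carry through verbatim. The main obstacle is step~(a): Requirement~5' must be strong enough to collapse $\s_1$ to a single point using only the capacity bound, and everything else hinges on that collapse together with the affine-supporting-hyperplane argument it enables in~(b).
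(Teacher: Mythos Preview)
Your proposal is correct and follows essentially the same route as the paper: first collapse $\s_1$ to a point directly from Requirement~5', then replace the supporting-hyperplane step in Theorem~\ref{pure=boundary} by the distinguishing measurement supplied by Requirement~5', and finally use uniqueness of the tight effect at each point of the unit sphere to recover all antipodal effects $\hat\Omega_{\hat\nu}$. The only cosmetic difference is that the paper phrases step~(c) as ``these effects generate all effects, hence Requirement~\ref{a.effects} is recovered in full,'' whereas you argue more narrowly that only antipodal effects and their tensor products are ever needed downstream; both observations suffice.
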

\proof
Assume Requirement 5', but not Requirement~\ref{a.effects}. It follows that
$\s_1$ contains a single state only: if it contained more than one state, there would exist some $\psi\in\s_1$
which is not completely mixed, which would then be distinguishable from some other state, contradicting that $\s_1$ has capacity $1$.
Requirement~\ref{a.effects} is used in the proof of Theorem~\ref{pure=boundary}.
This proof is easily modified to comply with Requirement 5' instead: adopting the notation from the proof,
the state $\varphi_\mathrm{mix}$ is not completely mixed, thus distinguishable from some other state. This proves existence of some $\hat\Omega$
with the claimed properties, and the other arguments remain unchanged, proving that $\hat \s_2$ can be represented as a unit ball.
All pure states in $\hat\s_2$ are not completely mixed, hence have a corresponding tight effect which is physically allowed.
But for every state on the surface of the ball, there exists only one unique tight effect which gives probability one for that state.
Hence, all these effects must be allowed, and since they generate the set of all effects, this proves Requirement~\ref{a.effects}
for use in~(\ref{thetai}) and the rest of the paper.
\qed

\end{document}